\newtheorem{theorem}{Theorem}
\newtheorem{proposition}{Proposition}
\newtheorem{definition}{Definition}
\newtheorem{corollary}{Corollary}
\newcolumntype{C}[1]{>{\centering\arraybackslash}p{#1}}
\newcolumntype{R}[1]{>{\raggedleft\arraybackslash}p{#1}}
\title{Exploiting Equitable Partitions for Efficient Block Triangularization}
\author{Mario Thüne}
\date{}
\begin{document}
\maketitle
\begin{abstract}
In graph theory a partition of the vertex set of a graph is called equitable if for all pairs of cells all vertices in one cell have an equal number of neighbours in the other cell. Considering the implications for the adjacency matrix one may generalize that concept as a block partition of a complex square matrix s.t. each block has constant row sum. It is well known that replacing each block by its row sum yields a smaller matrix whose multiset of eigenvalues is contained in the initial spectrum. We generalize this approach to weighted row sums and rectangular matrices and derive an efficient unitary transformation which approximately block triangularizes a matrix w.r.t. an arbitrary partition. Singular values and Hermiticity (if present) are preserved. The approximation is exact in the equitable case and the error can be bounded in terms of unitarily invariant matrix norms.
\end{abstract}

\section{Introduction}
\subsection{Equitable Partitions}
Let $\Gamma$ be a (multi-)graph and let $\mathbf{A}$ be its adjacency matrix, whose entries $a_{vw}$ are the number of edges connecting vertices $v$ and $w$.
Let $\Pi=\left(c_1,\ldots,c_k\right)$ be a partition of the vertex set of $\Gamma$ into $k$ cells, inducing a block partition of $\mathbf{A}$, i.e. a simultaneous (disjoint and exhaustive) partition of its rows and columns. It is convenient to define an \emph{indicator matrix} of a partition as
\begin{definition}\label{defB}
$$\mathbf{B}=\left(b_{vi}\right)\in\left\{0,1\right\}^{N\times k}\text{ with }b_{vi}=\left\{\begin{array}{ll}1&\text{, if item $v$ is in cell $i$}\\0&\text{, else.}\end{array}\right.$$
\end{definition}
The partition $\Pi$ is called equitable if all vertices of $\Gamma$ in the same cell have the same number of neighbours in any cell. Equivalently, we may call it equitable if each induced submatrix of $\mathbf{A}$ has constant row sum. The equitable partitions of $\mathbf{A}$ ordered by refinement form a lattice which contains the trivial equitable partition, in which every cell has size exactly one, as the minimum. From the definition it follows that a partition of $\mathbf{A}$ is equitable if and only if there exits a matrix $\mathbf{\Theta}=\left(\theta_{ij}\right)$ s.t.
\begin{equation}\label{equiCentral}
\mathbf{A}\mathbf{B}=\mathbf{B}\mathbf{\Theta}
\quad\text{i.e.}\quad
\forall\ i,j\in\left\{1,\ldots,k\right\}\ \forall\ u\in\left\{1,\ldots,n_i\right\}\ \sum\limits_v^{n_j}\mathbf{A}_{ij,{uv}}=\theta_{ij}.
\end{equation}
The matrix $\mathbf{\Theta}$ is called the quotient of the partition. Its entries $\theta_{ij}$ are the constant row sums of the matrix blocks $A_{ij}$ induced by the cells $c_i$ and $c_j$, which are the number of edges connecting a fixed vertex in $c_i$ to vertices in $c_j$.
\subsection{Applications in Graph and Matrix Theory}
The notion of equitable partitions was developed in graph theory. In network analysis the same concept is also known as \emph{exact coloration}~\cite{EB94RegEqu} or \emph{exact role assignment}~\cite{lerner2005assignments}. It is closely related to \emph{graph fibration}~\cite{BV02FibGra} and arises naturally in the context of graph automorphisms problems since every non trivial automorphism induces a non trival equitable partition. As graph invariants which can be searched for quickly using quite efficient algorithms, equitable partitions are useful in attacking graph isomorphism problems. In that context they are also known as  \emph{1-dimensional Weisfeiler-Lehman stabilizers}~\cite{CFI92OptLow}.\newline
Block partitioned matrices s.t. each block has constant row sum are called \emph{block-stochastic matrices}. In the context of markov chains the technique of \emph{lumping} exploits equitable partitions in order to reduce the number of states~\cite{Buchholz94}. The quotient $\mathbf{\Theta}$ is also known as the \emph{front divisor}~\cite{CDS80SpeGra}. Famously, the spectrum of $\mathbf{\Theta}$, called the main spectrum, is a subset of the spectrum of $\mathbf{A}$ since the columns of $\mathbf{B}$ span an invariant subspace if \eqref{equiCentral} holds. Therefore, there is a similarity transformation which $2\times2$ block triangularizes $\mathbf{A}$ s.t. one diagonal block is the quotient. Such a transformation can be constructed and applied efficiently in a way provided in~\cite{H59AppThe},~\cite{Chang2011559}. The block triangularization method given below differs from that approach in order to fit in a generalized framework of equitability and provides efficient unitary transformations.
\subsection{Aim and Outline}
We will generalize the notion of ordinary equitable partitions to arbitrary weighted partitions of the rows and columns of complex matrices. According to a given partition we derive an efficient and stable unitary similarity transformation in order to $2\times2$ block triangularize the matrix up to an error term, which is minimized w.r.t. to several matrix norms and vanishes if and only if exact equitability holds. The transformation can be computed in $O\left(N\right)$ and applied in $O\left(N^2\right)$. It can be further generalized enabling the application to rectangular matrices while maintaining the unitarity property. However the further generalized transformation does only preserve the singular values, but (in general) not the spectrum.\newline
Despite offering insides into the structure of objects represented by a graph or a matrix our notion of equitability and its corresponding transformation may be used for compression and for preprocessing eigen and singular value problems. Although describing our transformation as an efficient compression method may seem to suggest that the exploited structure is, in a sense, rare, the concept of equitable partitions, as indicated above, is rather common in various applications, where it is found directly in the studied problem or as an interesting exceptional or ideal case. The usefulness might be increased in particular by the fact that deviations from an exact equitability may be allowed within our framework.
\newline
In order to get used to the concept and some notation, we briefly discuss in section~\eqref{secUEP} the special case of an ordinary unweighted equitable partition of a complex square matrix including the derivation of the associated efficient unitary block triangularization and we give an example. In the main part, section~\eqref{secWEP}, we consider weighted not necessarily equitable partitions introducing the deviation matrix and give our main theorem. In section~\eqref{secDMR} we consider non exact equitability as an eigenvalue perturbation, give a short overview of several other known generalizations of equation~\eqref{equiCentral}, and briefly consider the problem of finding an equitable partition. Our further generalized version of the concept applicable to rectangular matrices can be found in the appendix.\newline
Throughout the article we use the apostrophe to denote the complex conjugated transpose without distinguishing between real and complex operands and we utilize the following notation
\begin{definition} Let $n\in\mathbb{N}$.
$\mathbf{j}_n=(\hspace{0.1em}\underbrace{1,\ldots,1}_{\text{$n$ times}}\hspace{0.1em})'$
and
$\mathbf{f}_n=(1,\hspace{-0.5em}\underbrace{0,\ldots,0}_{\text{$(n-1)$ times}}\hspace{-0.5em})'$.
\end{definition}

\section{Unweighted Equitable Partitions}\label{secUEP}
\subsection{Indicator Matrix and Quotient}
Let $\mathbf{A}\in\mathbb{C}^{N\times N}$ and let $\Pi=\left(c_1,\ldots,c_k\right)$ be a simultaneous (disjoint and exhaustive) partition of its rows and columns with \emph{indicator matrix} $\mathbf{B}$ as in definition~\eqref{defB}. Let $\mathbf{A}_{ij}\in\mathbf{C}^{n_i\times n_j}$ be the matrix block in $\mathbf{A}$ induced by row cell $c_i$ and column cell $c_j$. Let $n_i$ be the size of the cell $c_i$ and let
\begin{equation}
\mathbf{N}=\left(\mathbf{B}'\mathbf{B}\right)^{\frac{1}{2}}=\operatorname{diag}\left(\sqrt{n_1},\ldots,\sqrt{n_k}\right)
\end{equation}
We introduce the front quotient, the rear quotient and the Rayleigh quotient respectively as
\begin{equation}
\mathbf{E}^{-}=\mathbf{N}^{-2}\mathbf{B}'\mathbf{A}\mathbf{B},\quad
\mathbf{E}^{+}=\mathbf{B}'\mathbf{A}\mathbf{B}\mathbf{N}^{-2},\quad
\mathbf{E}^{\mathrm{0}}=\mathbf{N}^{-1}\mathbf{B}'\mathbf{A}\mathbf{B}\mathbf{N}^{-1}
\end{equation}
We call $\mathbf{A}$ \emph{front equitable} (i) and respectively \emph{rear equitable} (ii) w.r.t. $\mathbf{B}$ if
\begin{equation}\label{FReq_part}
\left(i\right)\ \mathbf{A}\mathbf{B}=\mathbf{B}\mathbf{E}^{-}\quad,\quad
\left(ii\right)\ \mathbf{B}'\mathbf{A}=\mathbf{E}^{+}\mathbf{B}'.
\end{equation}
It is easy to see that for Hermitian matrices row equitability and column equitability imply each other. For the rest of this section we assume front equitability, i.e. 
\begin{equation}\label{fronteq}
\mathbf{A}_{ij}\mathbf{j}_{n_j}=e^{-}_{ij}\mathbf{j}_{n_i}\quad \forall\ i,j\in\left\{1,\ldots,k\right\}.
\end{equation}
\subsection{Block Triangularization}
In order to block triangularize $\mathbf{A}$ according to $\mathbf{B}$ we utilize the Householder matrices
\begin{equation}\label{EPhouseholder}
\mathbf{H}_i=\mathbf{I}_{n_i}-2\frac{\mathbf{y}_i\mathbf{y}_i'}{\mathbf{y}_i'\mathbf{y}_i}\ ,\quad
\mathbf{y}_i=\mathbf{j}_{i}+\sqrt{n_i}\mathbf{f}_{n_i}.
\end{equation}
The following useful relations are easily verified
\begin{equation}\label{EPrelations}
\mathbf{H}_i\mathbf{f}_{n_i}=-\frac{1}{\sqrt{n_i}}\mathbf{j}_{n_i}\quad,\quad
\mathbf{H}_i'\mathbf{j}_{n_i}=-\sqrt{n_i}\mathbf{f}_{n_i}.
\end{equation}
In order to simplify notations but w.l.o.g. we assume \emph{suitable indexing} which means that $\mathbf{A}$ and $\mathbf{B}$ are indexed in such a way that for $u$ in cell $c_i$ and $v$ in cell $c_j$ it holds that $i<j$ implies $u<v$. Then our proposed transformation of $\mathbf{A}$ can be written conveniently in matrix form using the matrix
\begin{equation}
\mathbf{\tilde{H}}=\operatorname{diag}\left(\mathbf{H}_1,\ldots,\mathbf{H}_k\right),
\end{equation}
which is explicitly block diagonal and, according to \eqref{EPhouseholder}, unitary.
\begin{equation}\label{transformA}
\mathbf{\tilde{A}}=\mathbf{\tilde{H}}'\mathbf{A}\mathbf{\tilde{H}}
=\left(\begin{array}{ccc}\mathbf{\tilde{A}}_{11}&\cdots&\mathbf{\tilde{A}}_{1k}\\
\vdots&\ddots&\vdots\\
\mathbf{\tilde{A}}_{k1}&\cdots&\mathbf{\tilde{A}}_{kk}
\end{array}\right)
\quad\text{with}\quad
\mathbf{\tilde{A}}_{ij}=\mathbf{H}_{i}'\mathbf{A}_{ij}\mathbf{H}_{j}.
\end{equation}
By \eqref{EPrelations} and \eqref{fronteq} there exists a matrix $\mathbf{E}=\left(e_{ij}\right)$ s.t.
\begin{equation}
\mathbf{\tilde{A}}_{ij}\mathbf{f}_{j}=e_{ij}\mathbf{f}_{n_i},
\end{equation}
which immediately shows that each $\mathbf{\tilde{A}}_{ij}$ is block triangular with the left upper block being the scalar $e_{ij}$. Therefore, there is a readily available, in general not unique permutation matrix $\mathbf{\Omega}$ such that
\begin{equation}
\mathbf{\hat{A}}=\mathbf{\Omega}'\mathbf{\tilde{A}}\mathbf{\Omega}=
\left(\begin{array}{cc}\mathbf{E}&\mathbf{D}\\\mathbf{0}&\mathbf{F}\end{array}\right)
\end{equation}
is explicitly block triangular. Since the applied transformations are unitary, the spectrum and the singular values of $\mathbf{A}$ are preserved. We will refer to $\mathbf{F}$, which in general depends on the indexing of $\mathbf{A}$ and on $\mathbf{\Omega}$, as a \emph{factor}. One shows that all factors are unitarily equivalent and that by similarity
\begin{equation}\sigma\left(\mathbf{A}\right)=\sigma\left(\mathbf{E}\right)+\sigma\left(\mathbf{F}\right).
\end{equation}
Additionally, if $\mathbf{v}$ is an eigenvector of $\mathbf{\hat{A}}$ then $\mathbf{\tilde{H}}\mathbf{\Omega}\mathbf{v}$ is an eigenvector of $\mathbf{A}$ to the same eigenvalue. One also shows that $\mathbf{D}$ vanishes if and only if rear equitability holds. The computational costs for the transformation $\mathbf{\tilde{H}}$ are of order $O\left(n_in_j\right)$ on each subblock for we apply only matrix vector multiplication and matrix addition since $\mathbf{H}_i$ is a rank one update of the identity. Therefore, the total costs are of order $O\left(N^2\right)$. Since $\mathbf{\tilde{H}}\mathbf{\Omega}$ is unitary, Hermiticity (if present) of $\mathbf{A}$ is preserved. Numeric stability is supported by using Householder matrices. Note that in this section we constructed $\mathbf{\tilde{H}}$ s.t. $\mathbf{E}=\mathbf{E}^{\mathrm{0}}$. In the general case those two matrices are unitarily equivalent but not necessarily identical.
\subsection{Example}
Let
$$\mathbf{A}_0=\left(\begin{array}{cccccc}
1&2&3&3&3&2\\
2&4&3&1&2&1\\
3&3&1&4&1&1\\
3&1&4&0&2&3\\
3&2&1&2&3&2\\
2&1&1&3&2&4
\end{array}\right)\quad\text{and}\quad
\mathbf{P}_0=\left(\begin{array}{cccccc}
1&0&0&0&0&0\\
0&1&0&0&0&0\\
0&0&0&0&0&1\\
0&0&0&1&0&0\\
0&0&0&0&1&0\\
0&0&1&0&0&0
\end{array}\right)
$$
One verifies that $\mathbf{A}_0$ is (unweighted) front equitable w.r.t. $\Pi_0=\left(1\vert2,6\vert3,4,5\right)$. Using the permutation $\mathbf{P}_0$
we can transform it into the suitably indexed form
$$\mathbf{A}=\mathbf{P}_0'\mathbf{A}_0\mathbf{P}_0=\left(\begin{array}{cccccc}
1&2&2&3&3&3\\
2&4&1&1&2&3\\
2&1&4&3&2&1\\
3&1&3&0&2&4\\
3&2&2&2&3&1\\
3&3&1&4&1&1
\end{array}\right),$$
which is (unweighted) front equitable w.r.t. $\Pi=\left(1\vert2,3\vert4,5,6\right)$ with front quotient $$\mathbf{E}^{-}=\left(\begin{array}{ccc}
1&4&9\\
2&5&6\\
3&4&6
\end{array}\right).$$
One may employ
$$\mathbf{H}_1=\mathbf{H}\left(\mathbf{j}_1\right)=-1,$$
$$\mathbf{H}_2=\mathbf{H}\left(\mathbf{j}_2\right)=-\frac{1}{\sqrt{2}}\left(\begin{array}{cc}1&1\\1&-1\end{array}\right),$$ 
$$\mathbf{H}_3=\mathbf{H}\left(\mathbf{j}_3\right)=-\frac{1}{\sqrt{3}}\left(\begin{array}{ccc}1&1&1\\1&\frac{1+\sqrt{3}}{-2}&\frac{1-\sqrt{3}}{-2}\\1&\frac{1-\sqrt{3}}{-2}&\frac{1+\sqrt{3}}{-2}\end{array}\right)$$ and 
$\mathbf{\tilde{H}}=\operatorname{diag}\left(\mathbf{H}_1,\mathbf{H}_2,\mathbf{H}_3\right)$
to transform $\mathbf{A}$ s.t.
$$\mathbf{\tilde{A}}=\mathbf{\tilde{H}}'\mathbf{A}\mathbf{\tilde{H}}=\left(\begin{array}{cccccc}
1&\frac{4}{\sqrt{2}}&0&\frac{9}{\sqrt{3}}&0&0\\
\frac{4}{\sqrt{2}}&5&0&6\frac{\sqrt{2}}{\sqrt{3}}&0&0\\
0&0&3&0&\mbox{-}3\scalebox{0.8}{\mbox{+}}\sqrt{3}&\mbox{-}3\mbox{-}\sqrt{3}\\
\frac{9}{\sqrt{3}}&6\frac{\sqrt{2}}{\sqrt{3}}&0&6&0&0\\
0&0&\mbox{-}3\scalebox{0.8}{\mbox{+}}\sqrt{3}&0&\sqrt{3}\mbox{-}1&\mbox{-}6\\
0&0&\mbox{-}3\mbox{-}\sqrt{3}&0&\mbox{-}6&\mbox{-}\sqrt{3}\mbox{-}1
\end{array}\right).$$
Using the permutation $\mathbf{\Omega}=\left(\begin{array}{cccccc}
1&0&0&0&0&0\\
0&1&0&0&0&0\\
0&0&0&1&0&0\\
0&0&1&0&0&0\\
0&0&0&0&1&0\\
0&0&0&0&0&1
\end{array}\right)$ we obtain the matrix
$$\mathbf{\hat{A}}=\mathbf{\Omega}'\mathbf{\tilde{A}}\mathbf{\Omega}=\left(\begin{array}{cccccc}
1&\frac{4}{\sqrt{2}}&\frac{9}{\sqrt{3}}&0&0&0\\
\frac{4}{\sqrt{2}}&5&6\frac{\sqrt{2}}{\sqrt{3}}&0&0&0\\
\frac{9}{\sqrt{3}}&6\frac{\sqrt{2}}{\sqrt{3}}&6&0&0&0\\
0&0&0&3&\mbox{-}3\scalebox{0.8}{\mbox{+}}\sqrt{3}&\mbox{-}3\mbox{-}\sqrt{3}\\
0&0&0&\mbox{-}3\scalebox{0.8}{\mbox{+}}\sqrt{3}&\sqrt{3}\mbox{-}1&\mbox{-}6\\
0&0&0&\mbox{-}3\mbox{-}\sqrt{3}&\mbox{-}6&\mbox{-}\sqrt{3}\mbox{-}1
\end{array}\right),$$
which is explicitly reducible. Since $\mathbf{A}$ is Hermitian, the (unweighted) partition $\Pi$ induces front and rear equitability and we actually obtain a block diagonal form. Note that both blocks are Hermitian but the front quotient $\mathbf{E}^{-}$ is not. One verifies that $$\mathbf{E}=\left(\begin{array}{ccc}
1&\frac{4}{\sqrt{2}}&\frac{9}{\sqrt{3}}\\
\frac{4}{\sqrt{2}}&5&6\frac{\sqrt{2}}{\sqrt{3}}\\
\frac{9}{\sqrt{3}}&6\frac{\sqrt{2}}{\sqrt{3}}&6
\end{array}\right)=\operatorname{diag}\left(1,2,3\right)^{\frac{1}{2}}\mathbf{E}^{-}\operatorname{diag}\left(1,2,3\right)^{-\frac{1}{2}}.$$
Let $\mathbf{F}$ denote the lower diagonal block of $\mathbf{\hat{A}}$. Let $\mathbf{V}_{\mathbf{E}}$ and $\mathbf{V}_{\mathbf{F}}$ be the eigenvector matrices of $\mathbf{E}$ and $\mathbf{F}$, respectively. Then one shows that $$\mathbf{V}=\mathbf{P}_0\mathbf{\tilde{H}}\mathbf{\Omega}\left(\begin{array}{cc}\mathbf{V}_{\mathbf{E}}&\mathbf{0}\\\mathbf{0}&\mathbf{V}_{\mathbf{F}}\end{array}\right)$$ is an eigenvector matrix of $\mathbf{A}$. Note that $\mathbf{A}$ and $\mathbf{V}$ need more storage than $\mathbf{\hat{A}}$, $\mathbf{V}_{\mathbf{E}}$, and $\mathbf{V}_{\mathbf{F}}$. The transformations $\mathbf{P}_0$, $\mathbf{\tilde{H}}$ and $\mathbf{\Omega}$ follow from $\Pi_0$ which can be stored as a vector. Due to the small size the blocks of $\mathbf{\tilde{H}}$ were given explicitly as dense matrices. For larger problems one would prefer the usual sparse form as a rank one update of the identity given in \eqref{EPhouseholder}.
\section{Weighted Equitable Partitions}\label{secWEP}
\subsection{Preliminaries}
In this section we generalize equitable partitions and accordingly the proposed block triangularization method for square matrices. We introduce the generalized quotient defined for arbitrary partitions of a matrix as a generalization of front and rear quotient. We also introduce the deviation vectors and the deviation matrix and utilize the norm of the latter in order to quantify deviations of a given partition from our generalized notion of equitability. The generalization of the efficient unitary similarity transformation introduced above yields a block triangularization up to an error term due to the deviation from equitability. A further generalization applicable to rectangular matrices preserving only singular values but in general not the spectrum is discussed in the appendix.\newline
Note that whenever we invert a matrix explicitly (i.e. not by complex conjugated transposition) this matrix is diagonal. The occasional uses of the pseudo inverse with the property
\begin{equation}
c^{\dagger}=\left\{\begin{array}{cc}
0&,c=0\\ \frac{1}{c}&,\text{else}
\end{array}\right. ,\quad c\in\mathbb{C}
\end{equation}
may be regarded as merely technical.
\subsection{Complex Householder Transformations}
This subsubsection aims at the transformation in definition~\eqref{def_H(x)} and its properties given in~\eqref{eq_relations_H}. We consider elementary unitary matrices (EUMs) which are rank (at most) one updates of the identity and necessarily (in order to be unitary)~\cite{S95EleUni} of the form
\begin{equation}
\mathbf{U}\left(\gamma,\mathbf{y}\right)=\mathbf{I}-\frac{2}{1+i\gamma}\left(\mathbf{y}'\mathbf{y}\right)^{\dagger}\mathbf{y}\mathbf{y}',\quad\mathbf{y}\in\mathbb{C}^{n},\gamma\in\mathbb{R}.
\end{equation}
EUMs are a complex generalization of real Householder matrices~\cite{H58UniTri},~\cite{L96ComEle}. We observe that for $c\in\mathbb{C}\setminus\left\{0\right\}$
and $\mathbf{P}$ being a permutation matrix
\begin{equation}\label{eqEscale}
\mathbf{U}\left(\gamma,c\mathbf{y}\right)=\mathbf{U}\left(\gamma,\mathbf{y}\right)\ ,\quad
\mathbf{U}\left(\gamma,\mathbf{P}\mathbf{y}\right)=\mathbf{P}\mathbf{U}\left(\gamma,\mathbf{y}\right)\mathbf{P}'.
\end{equation}
Let $\mathbf{x}$ and $\mathbf{z}$ be non vanishing complex vectors. We seek an EUM mapping $\mathbf{z}$ into the direction of $\mathbf{x}$, i.e. a complex vector $\mathbf{y}$ and a real number $\gamma$ s.t.
\begin{equation}\label{eqEytox}
\mathbf{U}\left(\gamma,\mathbf{y}\right)\mathbf{z}=\alpha\mathbf{x}\quad\text{with}\quad\alpha\in\mathbb{C}\setminus\left\{0\right\},
\end{equation}
which implies that $\lVert\mathbf{x}\rVert$ and $\lVert\mathbf{z}\rVert$ determine $\alpha$ up to a phase factor
\begin{equation}
\sqrt{\left(\mathbf{U}\left(\gamma,\mathbf{y}\right)\mathbf{z}\right)
'\left(\mathbf{U}\left(\gamma,\mathbf{y}\right)\mathbf{z}\right)}=
\lVert\mathbf{z}\rVert=\left|\alpha\right|\lVert\mathbf{x}\rVert.
\end{equation}
Again by~\eqref{eqEytox} $\mathbf{y}$ is a linear combination of $\mathbf{x}$ and $\mathbf{z}$, namely 
\begin{equation}\label{u=y+ax=cu}
\mathbf{z}-\alpha\mathbf{x}=\frac{2}{1+i\gamma}\left(\mathbf{y}'\mathbf{y}\right)^{\dagger}\left(\mathbf{y}'\mathbf{z}\right)\mathbf{y}.
\end{equation}
Since according to \eqref{eqEscale} scaling of $\mathbf{y}$ does not change $\mathbf{U}\left(\gamma,\mathbf{y}\right)$ we may choose $\mathbf{y}=\mathbf{x}-\frac{1}{\alpha}\mathbf{z}$.
We are particular interested in the case $\mathbf{z}=\mathbf{f}_n$. Setting $\alpha=\beta\left|\alpha\right|$ and using \eqref{u=y+ax=cu}, we reach in the non trivial case, $\mathbf{y}\neq0$,
\begin{equation}\label{eqGamma}
\gamma\left(\mathbf{x},\beta\right)=
\left(\lVert\mathbf{x}\rVert-\operatorname{Re}\left(\beta x^1\right)\right)^{\dagger}\operatorname{Im}\left(\beta x^1\right).
\end{equation}
Thus, the required EUM of $\mathbf{f}_n$ into the direction of $\mathbf{x}$ is determined up to a complex parameter $\beta$ lying on the unit circle. We introduce \begin{equation}\label{H=U}
\mathbf{H}\left(\mathbf{x},\beta\right)=
\mathbf{U}\left({\gamma\left(\mathbf{x},\beta\right)},
\mathbf{x}-\lVert\mathbf{x}\rVert\overline{\beta}\mathbf{f}_n\right)\text{ with }
\left|\beta\right|=1
\end{equation}
and give an explicit definition.
\begin{definition}\label{def_H(x)}
Let $\mathbf{x}\in\mathbb{C}^n$ with $\lVert\mathbf{x}\rVert>0$, let $x^1=\mathbf{f}_n'\mathbf{x}$ denote its first entry and let $\beta$ be a complex number with $\left|\beta\right|=1$, then
$$\mathbf{H}\left(\mathbf{x},\beta\right)=\left\{\begin{array}{cc}
\mathbf{I}_n & ,\frac{1}{\lVert\mathbf{x}\rVert}\mathbf{x}=\overline{\beta}\mathbf{f}_n\\
\mathbf{I}_n+\frac{
\left(\mathbf{x}-\lVert\mathbf{x}\rVert\overline{\beta}\mathbf{f}_n\right)
\left(\mathbf{x}-\lVert\mathbf{x}\rVert\overline{\beta}\mathbf{f}_n\right)'}
{\lVert\mathbf{x}\rVert\overline{\beta}\left(
\overline{x^1}-\lVert\mathbf{x}\rVert\beta\right)} & ,\text{ else. }
\end{array}\right.
$$
\end{definition} 
Using $\mathbf{y}=\mathbf{x}-\lVert\mathbf{x}\rVert\overline{\beta}\mathbf{f}_n$ we may rewrite
\begin{equation}
\mathbf{H}\left(\mathbf{x},\beta\right)=\mathbf{I}_n+
\frac{\beta}{\lVert\mathbf{x}\rVert}\left(\mathbf{y}'\mathbf{f}_n\right)^{\dagger}
\mathbf{y}\mathbf{y}'=\mathbf{I}_n-\left(\mathbf{x}'\mathbf{y}\right)^{\dagger}\mathbf{y}\mathbf{y}'.
\end{equation}
And we summarize the following properties
\begin{equation}\label{eq_relations_H}
\mathbf{H}\left(\mathbf{x},\beta\right)\mathbf{f}_n
=\frac{\beta}{\lVert\mathbf{x}\rVert}\mathbf{x}\quad
\text{and}\quad\mathbf{H}\left(\mathbf{x},\beta\right)'\mathbf{x}
=\frac{\lVert\mathbf{x}\rVert}{\beta}\mathbf{f}_n.
\end{equation}
Since $\mathbf{H}\left(\mathbf{x},\beta\right)$ is a rank one update of $\mathbf{I}_n$, it can be stored with $O\left(n\right)$ and multiplied with a square matrix of size $n$ in $O\left(n^2\right)$. Note that $\mathbf{H}\left(\mathbf{x},\beta\right)$ crucially depends on the ordering of the entries of $\mathbf{x}$,
\begin{equation}\label{eq_permdependent_H}
\mathbf{H}\left(\mathbf{P}'\mathbf{x},\beta\right)\neq\mathbf{P}'\mathbf{H}\left(\mathbf{x},\beta\right)\mathbf{P}\quad\text{for general $\mathbf{x}$ and permutation matrix $\mathbf{P}$}.
\end{equation}
Although its norm is determined to be $1$, the actual choice of $\beta$ is arbitrary. We may exploit that freedom in order to enhance the numerical properties of $\mathbf{H}\left(\beta,\mathbf{x}\right)$. Particular useful is a choice s.t. $\beta x^1\in\mathbb{R}$, implying $\gamma=0$ by \eqref{eqGamma} and leading to a Hermitian matrix. Furthermore, for real $\mathbf{x}$, $\beta\in\left\{-1,1\right\}$ ensures a real matrix. A practical recommendation might be
\begin{definition}\label{beta0}
$\beta_0\left(\mathbf{x}\right)=\left\{\begin{array}{ll}
-\frac{\overline{x^1}}{\left|x^1\right|}&, x^1\neq0\\
1&, x^1=0\end{array}\right.\quad,\ \mathbf{x}\in\mathbb{C}^n$,
\end{definition}\noindent
which supports numerical stability and coincides with the usual recommendation for the numerical construction of a real Householder matrix. In the previous section we applied $\beta_0$ tacitly.\newline
\subsection{Weighted Partition, Quotient and Deviation Matrix}
Let $\Pi=\left(c_1,\ldots,c_k\right)$ be a partition of $\{1,\ldots,N\}$ into $k$ cells with indicator matrix $\mathbf{B}$. Let $\mathbf{w}\in\mathbb{C}^{N}$ and $\mathbf{A}\in\mathbb{C}^{N\times N}$. We introduce the \emph{weighted indicator matrix} \mbox{$\mathbf{W}=\operatorname{diag}\left(\mathbf{w}\right)\mathbf{B}$}.
\begin{definition}\label{defWIM}
Let $\Pi=\left(c_1,\ldots,c_k\right)$ be a partition of $\left\{1,\ldots,N\right\}$. Let $\mathbf{w}\in\mathbb{C}^N$ and let $w^v$ denote its $v$-th entry.
$$\mathbf{W}=\left(w_{vi}\right)\in\mathbb{C}^{N\times k}\text{ with }w_{vi}=\left\{\begin{array}{ll}w^v&\ v\in c_i\\0&\text{, else}\end{array}\right.$$
\end{definition}
A weighted indicator matrix $\mathbf{W}$ is called \emph{admissible} if $\lVert\mathbf{w}_i\rVert$ for all vector blocks $\mathbf{w}_i$ induced by $c_i$. This implies that $\mathbf{W}'\mathbf{W}$ is invertible and ultimately ensures that the complete spectrum of the quotient, to be defined below, is contained in the spectrum of $\mathbf{A}$. For the rest of this section we assume admissibility.\newline
We call $\mathbf{W}$ \emph{suitably indexed} if for $u\in c_i,v\in c_j$ it holds that $i<j$ implies $u<v$. In that case the index set is ordered block wise and $\mathbf{W}$ is explicitly block diagonal. In order to simplify the exposition, we may w.l.o.g. assume a suitable indexing.
\begin{definition}\label{defFM}Let $\mathbf{A}\in\mathbb{C}^{N\times N}$ and let $\mathbf{W}$ be an admissible weighted indicator matrix and let $\alpha\in\mathbb{R}$. The \emph{generalized quotient} $\mathbf{E}^{\alpha}$ is given by
$$\mathbf{E}^{\alpha}=\left(\mathbf{W}'\mathbf{W}\right)^{-\frac{1-\alpha}{2}}\mathbf{W}'
\mathbf{A}\mathbf{W}\left(\mathbf{W}'\mathbf{W}\right)^{-\frac{1+\alpha}{2}}.$$
\end{definition}
We call $\mathbf{E}^{0}$ the \emph{Rayleigh quotient}. The matrix entries of $\mathbf{E}^{\alpha}$ are
\begin{equation}
e^{\alpha}_{ij}=\left(\frac{1}{\lVert\mathbf{w}_i\rVert}\right)^{\left(1-\alpha\right)}
\mathbf{w}_i'\mathbf{A}_{ij}\mathbf{w}_j
\left(\frac{1}{\lVert\mathbf{w}_j\rVert}\right)^{\left(1+\alpha\right)}.
\end{equation}
Since $\mathbf{E}^{\alpha}=\left(\mathbf{W}\mathbf{W}\right)^{\frac{\alpha}{2}}\mathbf{E}^0\left(\mathbf{W}\mathbf{W}\right)^{-\frac{\alpha}{2}}$, all generalized quotients are similar. We distinguish the \emph{front quotient} $\mathbf{E}^{-}=\mathbf{E}^{-1}$ and the \emph{rear quotient} $\mathbf{E}^{+}=\mathbf{E}^{1}$. The matrix $\mathbf{A}$ is called \emph{front equitable} w.r.t. $\mathbf{W}$ if and only if
\begin{equation}
\mathbf{A}\mathbf{W}=\mathbf{W}\mathbf{E}^{-}
\quad\text{, i.e.}\quad
\forall\ i,j\in\left\{1,\ldots,k\right\}\ \mathbf{A}_{ij}\mathbf{w}_{j}=e^{-}_{ij}\mathbf{w}_{i}
\end{equation}
and we call $\mathbf{A}$ \emph{rear equitable} w.r.t. $\mathbf{W}$ if and only if
\begin{equation}
\mathbf{W}'\mathbf{A}={\mathbf{E}^{+}}\mathbf{W}'
\quad\text{, i.e.}\quad
\forall\ i,j\in\left\{1,\ldots,k\right\}\ \mathbf{w}_{i}'\mathbf{A}_{ij}={e^{+}_{ij}}\mathbf{w}_{j}'.
\end{equation}
\begin{definition}
Maintaining the notation above the \emph{front} and \emph{rear deviation vectors} are defined respectively as
$$\mathbf{t}^{-}_{ij}=\frac{1}{\lVert\mathbf{w}_j\rVert}\left(\mathbf{A}_{ij}\mathbf{w}_{j}-e^{-}_{ij}\mathbf{w}_{i}\right)
\quad\text{ and }\quad
{\mathbf{t}^{+}_{ij}}=\frac{1}{\lVert\mathbf{w}_i\rVert}\left({\mathbf{w}_{i}'\mathbf{A}_{ij}}-{e^{+}_{ij}}\mathbf{w}_{j}'\right)',$$
and the \emph{front} and \emph{rear deviation matrices} are
$$\mathbf{T}^{\pm}=\left(\begin{array}{ccc}
\mathbf{t}^{\pm}_{11}&\cdots&\mathbf{t}^{\pm}_{1k}\\
\vdots&\ddots&\vdots\\
\mathbf{t}^{\pm}_{k1}&\cdots&\mathbf{t}^{\pm}_{kk}
\end{array}\right)\in\mathbb{C}^{N\times k}\ \ ,\textrm{i.e.}\ \ 
\begin{array}{lll}
\mathbf{T}^{-}=\left(\mathbf{A}\mathbf{W}-\mathbf{W}\mathbf{E}^{-}\right)\left(\mathbf{W}'\mathbf{W}\right)^{-\frac{1}{2}}\\
\\
\mathbf{T}^{+}=
\left(\mathbf{A}'\mathbf{W}-\mathbf{W}{\mathbf{E}^{+}}'\right)\left(\mathbf{W}'\mathbf{W}\right)^{-\frac{1}{2}}.
\end{array}$$
\end{definition}
The entries of $\mathbf{E}^{\pm}$ and the deviation vectors have an intuitive interpretation in the framework of ordinary equitability arising for $\mathbf{w}_i=\mathbf{j}_{n_i}$. Then $e^{-}_{ij}$ and $\lVert\mathbf{t}^{-}_{ij}\rVert$ ($e^{+}_{ij}$ and $\lVert\mathbf{t}^{+}_{ij}\rVert$) are the mean and the standard deviation of the row (column) sums of $\mathbf{A}_{ij}$.\newline
Scaling the vector blocks $\mathbf{w}_{i}$ by $\mu_i\in\mathbb{C}\setminus\left\{0\right\}$ changes the entries of the generalized quotient to $\mu_i^{\alpha}e^{\alpha}_{ij}\mu_j^{-\alpha}$ although such a transformation sustains equitability (if present). Note that $e^{0}_{ij}$ and $\lVert{\mathbf{t}^{\pm}_{ij}}\rVert$, and therefore the singular values of $\mathbf{T}^{\pm}$, are independent of such a scaling. By definition, $\mathbf{T}^{\pm}$ is an all zero matrix if and only if its respective equitability holds. At the end of this section, we will consider suitable norms of $\mathbf{T}^{\pm}$ as measures for deviation from equitability.
\subsection{(Approximate) Block Triangularization}
Let $\mathbf{W}$ be an admissible weighted indicator matrix of a partition $\Pi=\left(c_1,\ldots,c_k\right)$ with weight vector $\mathbf{w}\in\mathbb{C}^{N}$ and indicator matrix $\mathbf{B}$. Let $\mathbf{w}_i$ be induced by $c_i$. Replacing $\mathbf{w}_i$ by $\mathbf{f}_{n_i}$ for all $i$ yields the new vector $\mathbf{f}$. Let $\mathbf{N}=\left(\mathbf{W}'\mathbf{W}\right)^{\frac{1}{2}}$ and
let $\mathbf{V}=\operatorname{diag}\left(\beta_1,\ldots,\beta_k\right)$ be a unitary diagonal matrix of size $k$. We introduce
\begin{equation}
\mathbf{Y}\left(\mathbf{W},\mathbf{V}\right)=\mathbf{Y}\left(\mathbf{w},\Pi,\mathbf{V}\right)=\operatorname{diag}\left(\mathbf{w}\right)\mathbf{B}
-\operatorname{diag}\left(\mathbf{f}\right)\mathbf{B}\mathbf{N}\mathbf{V}',
\end{equation}
which has the form of a weighted indicator matrix. The actual choice of $\mathbf{V}$ is a priori arbitrary. This freedom may be exploited in order to enhance the numerical properties of the transformation matrix given in the next definition.
\begin{definition}\label{defH}
Let $\mathbf{Y}$ be derived from an admissible weighted indicator matrix $\mathbf{W}$ and a unitary diagonal matrix $\mathbf{V}$ as above, then
$$\mathbf{H}\left(\mathbf{W},\mathbf{V}\right)=\mathbf{I}_N-\mathbf{Y}\left(\mathbf{W}'\mathbf{Y}\right)^{\dagger}\mathbf{Y}'.$$
\end{definition}
Since $\mathbf{Y}$ and $\mathbf{W}$ have the same block diagonal form, $\mathbf{Y}'\mathbf{W}$ is a diagonal matrix and $\mathbf{H}\left(\mathbf{V},\mathbf{W}\right)$ is block diagonal, hence its numerical properties are comparable to those of a single Householder matrix. In particular, the costs for computing and storing are of order $O\left(N\right)$, and it can be applied to a square matrix in $O\left(N^2\right)$. For suitably indexed $\mathbf{W}$ the block diagonal form of $\mathbf{H}\left(\mathbf{W},\mathbf{V}\right)$ is explicit,
\begin{equation}
\mathbf{H}\left(\mathbf{W},\mathbf{V}\right)=\operatorname{diag}\left(\mathbf{H}\left(\mathbf{w}_1,\beta_1\right),\ldots,\mathbf{H}\left(\mathbf{w}_k,\beta_k\right)\right).
\end{equation}
The diagonal blocks are given in definition \eqref{def_H(x)}. For $\mathbf{A}\in\mathbb{C}^{N\times N}$ we consider
\begin{equation}
\mathbf{\tilde{A}}=\mathbf{H}\left(\mathbf{W},\mathbf{V}\right)'\mathbf{A}\mathbf{H}\left(\mathbf{W},\mathbf{V}\right)\quad\text{with}\quad\mathbf{\tilde{A}}_{ij}=\mathbf{H}\left(\mathbf{w}_i,\beta_i\right)'\mathbf{A}_{ij}\mathbf{H}\left(\mathbf{w}_j,\beta_j\right).
\end{equation}
By the properties \eqref{eq_relations_H} of the $\mathbf{H}\left(\mathbf{w}_i,\beta_i\right)$ it follows that
\begin{align}
\mathbf{\tilde{A}}_{ij}\mathbf{f}_{n_j}&\sim \mathbf{f}_{n_i}\ \forall i,j
\text{ if and only if }\mathbf{A}\text{ is front equitable w.r.t. }\mathbf{W},\\
 \mathbf{f}_{n_i}'\mathbf{\tilde{A}}_{ij}&\sim \mathbf{f}_{n_j}'\ \forall i,j
\text{ if and only if }\mathbf{A}\text{ is rear equitable w.r.t. }\mathbf{W}.
\end{align}
If we consider for a moment front (row) equitability, the first column (row) of each block $\mathbf{\tilde{A}}_{ij}$ would be all zero from its second to last entry. This implies an implicit block triangular form of $\mathbf{\tilde{A}}$, which can be made explicit by the following permutation mapping the first index of each cell accordingly into $\left\{1,\ldots,k\right\}$.
\begin{definition}\label{defOmega}
Let $\mathbf{n}=\left(n_1,\ldots,n_k\right)$ be a sequence of $k$ positive integers with $\sum_{i=1}^{k}n_i=N$. The permutation
$\Omega_{\mathbf{n}}: \left\{1,\ldots,N\right\}\to\left\{1,\ldots,N\right\}$ is defined by
$$\Omega_{\mathbf{n}}\left(m_i+\sum\limits_{j=i}^{i-1}n_j\right)
=\left\{
\begin{array}{ll}
i & ,m_i=1\\
k-i+m_i+\sum\limits_{j=i}^{i-1}n_j & ,m_i\in\left\{2,\ldots,n_i\right\}
\end{array}
\right.$$
with $i\in\left\{1,\ldots,k\right\}$.
\end{definition}
We proceed with the general case and give the following theorem, which may be seen as a corollary of theorem \eqref{theoGen}. In order to keep this section self contained, it is proven independently.
\begin{theorem}\label{theo1}
Let $\Pi=\left(c_1,\ldots,c_k\right)$ be an admissible partition for $\mathbf{A}\in\mathbb{C}^{N\times N}$ and $\mathbf{w}\in\mathbb{C}^{N}$ with weighted indicator matrix $\mathbf{W}\in\mathbb{C}^{N\times k}$, generalized quotient $\mathbf{E}^{\alpha}$ and deviation matrices $\mathbf{T}^{\pm}$. Let $\mathbf{V}=\operatorname{diag}\left(\beta_1,\ldots,\beta_k\right)$ be a unitary diagonal matrix and let $\mathbf{\tilde{H}}=\mathbf{H}\left(\mathbf{W},\mathbf{V}\right)$ as in definition \eqref{defH} and let $\mathbf{\Omega}$ be the permutation matrix corresponding to $\Omega_{\left(\left|c_1\right|,\ldots,\left|c_k\right|\right)}$. Let
$$\mathbf{\tilde{A}}=\mathbf{\tilde{H}}'\mathbf{A}\mathbf{\tilde{H}}=
\left(\begin{array}{ccc}
\mathbf{\tilde{A}}_{11}&\cdots&\mathbf{\tilde{A}}_{1k}\\
\vdots&\ddots&\vdots\\
\mathbf{\tilde{A}}_{k1}&\cdots&\mathbf{\tilde{A}}_{kk}\\
\end{array}\right)
,\quad
\mathbf{\tilde{A}}_{ij}=\mathbf{H}\left(\mathbf{w}_{i},\beta_i\right)'\mathbf{A}_{ij}\mathbf{H}\left(\mathbf{w}_{j},\beta_j\right)$$
and 
$$\mathbf{\hat{A}}=\mathbf{\Omega}'\mathbf{\tilde{A}}\mathbf{\Omega}=
\left(\begin{array}{cc}
\mathbf{E}^{\phantom{-}}&{\mathbf{D}^{+}}'\\
\mathbf{D}^{-}&\mathbf{F}^{\phantom{+}}
\end{array}\right)\quad\text{with}\quad
\mathbf{E}\in\mathbb{C}^{k\times k}.$$
Then $\mathbf{\hat{A}}$ is unitarily similar to $\mathbf{A}$, the upper left block $\mathbf{E}$ is unitarily similar to the Rayleigh quotient $\mathbf{E}^0$ and the off-diagonal blocks $\mathbf{D}^{\pm}$ have the same singular values as $\mathbf{T}^{\pm}$, respectively. Additionally, any eigenvector $\mathbf{\hat{z}}$ of $\mathbf{\hat{A}}$ yields an eigenvector $\mathbf{z}=\mathbf{\tilde{H}}\mathbf{\Omega}\mathbf{\hat{z}}$ of $\mathbf{A}$ to the same eigenvalue.
\end{theorem}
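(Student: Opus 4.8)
The plan is to dispatch the four assertions in increasing order of difficulty, drawing everything out of the two identities in \eqref{eq_relations_H} together with one algebraic fact: each deviation vector is orthogonal to its defining weight block. First, the unitary similarity of $\mathbf{\hat{A}}$ and the eigenvector transfer are immediate. Each diagonal block $\mathbf{H}\left(\mathbf{w}_i,\beta_i\right)$ is a unitary rank-one EUM, so $\mathbf{\tilde{H}}$ is unitary, and composing with the permutation gives a unitary $\mathbf{Q}=\mathbf{\tilde{H}}\mathbf{\Omega}$ with $\mathbf{\hat{A}}=\mathbf{Q}'\mathbf{A}\mathbf{Q}$. Hence $\mathbf{A}\mathbf{Q}\mathbf{\hat{z}}=\lambda\mathbf{Q}\mathbf{\hat{z}}$ whenever $\mathbf{\hat{A}}\mathbf{\hat{z}}=\lambda\mathbf{\hat{z}}$, which is exactly the claim about $\mathbf{z}=\mathbf{\tilde{H}}\mathbf{\Omega}\mathbf{\hat{z}}$.

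For the block $\mathbf{E}$ I would identify it as the collection of $(1,1)$-entries of the transformed blocks $\mathbf{\tilde{A}}_{ij}$, since $\mathbf{\Omega}$ gathers precisely the first index of each cell into the leading $k\times k$ corner. Using \eqref{eq_relations_H} to evaluate $\mathbf{f}_{n_i}'\mathbf{H}\left(\mathbf{w}_i,\beta_i\right)'=\frac{\overline{\beta_i}}{\lVert\mathbf{w}_i\rVert}\mathbf{w}_i'$ and $\mathbf{H}\left(\mathbf{w}_j,\beta_j\right)\mathbf{f}_{n_j}=\frac{\beta_j}{\lVert\mathbf{w}_j\rVert}\mathbf{w}_j$, the $(1,1)$-entry of $\mathbf{\tilde{A}}_{ij}$ collapses to $\overline{\beta_i}\,e^{0}_{ij}\,\beta_j$, so that $\mathbf{E}=\mathbf{V}'\mathbf{E}^{0}\mathbf{V}$ with $\mathbf{V}$ unitary diagonal, giving the asserted unitary similarity to the Rayleigh quotient.

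The heart of the argument is the claim on $\mathbf{D}^{-}$. I would expand the first column of each transformed block as $\mathbf{\tilde{A}}_{ij}\mathbf{f}_{n_j}=\frac{\beta_j}{\lVert\mathbf{w}_j\rVert}\mathbf{H}\left(\mathbf{w}_i,\beta_i\right)'\mathbf{A}_{ij}\mathbf{w}_j$ and substitute $\mathbf{A}_{ij}\mathbf{w}_j=\lVert\mathbf{w}_j\rVert\mathbf{t}^{-}_{ij}+e^{-}_{ij}\mathbf{w}_i$ from the definition of the front deviation vector. The term in $\mathbf{w}_i$ is mapped by $\mathbf{H}\left(\mathbf{w}_i,\beta_i\right)'$ to a multiple of $\mathbf{f}_{n_i}$, feeding only the $(1,1)$-entry already handled, while the remaining entries are those of $\beta_j\mathbf{H}\left(\mathbf{w}_i,\beta_i\right)'\mathbf{t}^{-}_{ij}$. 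The key lemma is then $\mathbf{w}_i'\mathbf{t}^{-}_{ij}=0$: since $e^{-}_{ij}=\mathbf{w}_i'\mathbf{A}_{ij}\mathbf{w}_j/\lVert\mathbf{w}_i\rVert^{2}$ is exactly the coefficient projecting $\mathbf{A}_{ij}\mathbf{w}_j$ off $\mathbf{w}_i$, the vector $\mathbf{H}\left(\mathbf{w}_i,\beta_i\right)'\mathbf{t}^{-}_{ij}$ is orthogonal to $\mathbf{f}_{n_i}$, i.e. its first entry vanishes. Thus $\beta_j\mathbf{H}\left(\mathbf{w}_i,\beta_i\right)'\mathbf{t}^{-}_{ij}$ is supported on entries $2,\ldots,n_i$ and agrees there with the $(i,j)$ block of $\mathbf{D}^{-}$. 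I would package this globally as $\mathbf{M}^{-}:=\mathbf{\tilde{H}}'\mathbf{T}^{-}\mathbf{V}$, whose block rows have vanishing first entry, so that $\mathbf{\Omega}'\mathbf{M}^{-}=\left(\begin{smallmatrix}\mathbf{0}\\\mathbf{D}^{-}\end{smallmatrix}\right)$. As $\mathbf{\tilde{H}}$, $\mathbf{V}$ and $\mathbf{\Omega}$ are unitary and prepending zero rows leaves the Gram matrix unchanged, $\mathbf{D}^{-}$ and $\mathbf{T}^{-}$ share the same singular values.

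Finally I would obtain the $\mathbf{D}^{+}$ statement by symmetry rather than recomputation: transposing $\mathbf{\hat{A}}=\mathbf{Q}'\mathbf{A}\mathbf{Q}$ shows that the identical transformation applied to $\mathbf{A}'$ yields $\mathbf{\hat{A}}'$, whose lower-left block is $\mathbf{D}^{+}$. Verifying that the front quotient of $\mathbf{A}'$ equals $\left(\mathbf{E}^{+}\right)'$, the front deviation matrix of $\mathbf{A}'$ is exactly $\mathbf{T}^{+}$, so the already-proved $\mathbf{D}^{-}$ result, read for $\mathbf{A}'$, delivers that $\mathbf{D}^{+}$ and $\mathbf{T}^{+}$ have equal singular values. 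I expect the conceptual crux to be the orthogonality lemma (short once isolated), while the main obstacle is the index bookkeeping: matching the row and column ordering induced by $\mathbf{\Omega}$ (definition~\eqref{defOmega}) to the block layout of $\mathbf{T}^{\pm}$, and confirming that it is precisely the orthogonality that collapses each block's first entry so that $\mathbf{\Omega}'\mathbf{M}^{-}$ carries exactly the claimed zero top block.
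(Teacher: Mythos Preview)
Your proposal is correct and follows essentially the same route as the paper's proof: both compute the $(1,1)$-entry of $\mathbf{\tilde{A}}_{ij}$ via \eqref{eq_relations_H} to get $\mathbf{E}=\mathbf{V}'\mathbf{E}^{0}\mathbf{V}$, and both obtain the singular-value equality for $\mathbf{D}^{-}$ from the identity $\mathbf{\Omega}'\mathbf{\tilde{H}}'\mathbf{T}^{-}\mathbf{V}=\left(\begin{smallmatrix}\mathbf{0}\\\mathbf{D}^{-}\end{smallmatrix}\right)$. Two minor cosmetic differences: you isolate the orthogonality $\mathbf{w}_i'\mathbf{t}^{-}_{ij}=0$ as a separate lemma (in the paper this is left implicit in the displayed chain of equalities), and you recover the $\mathbf{D}^{+}$ statement by passing to $\mathbf{A}'$ and invoking the $\mathbf{D}^{-}$ result, whereas the paper simply repeats the analogous one-line computation for the first row of $\mathbf{\tilde{A}}_{ij}$; neither change affects the substance.
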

\begin{proof}
Unitary similarity to $\mathbf{A}$ follows from the unitarity of $\mathbf{\tilde{H}}$ and $\mathbf{\Omega}$.\newline
Considering the matrix blocks $\mathbf{\tilde{A}}_{ij}$ of $\mathbf{\tilde{A}}$ induced by cells $c_i$ and $c_j$ we have 
\begin{equation}
e_{ij}=\mathbf{f}_{n_i}'\mathbf{\tilde{A}}_{ij}\mathbf{f}_{n_j}=
\frac{\overline{\beta_i}}{\lVert\mathbf{w}_i\rVert}
\frac{\beta_j}{\lVert\mathbf{w}_j\rVert}
\mathbf{w}_{i}'\mathbf{A}_{ij}\mathbf{w}_{j}=
\frac{\beta_j}{\beta_i}e^{0}_{ij}.
\end{equation}
By $\mathbf{\Omega}$ those $e_{ij}$ are mapped accordingly into the upper left block $\mathbf{E}$. Therefore, we may rewrite $\mathbf{E}=\mathbf{V}'\mathbf{E}^0\mathbf{V}$, which proofs unitary similarity of $\mathbf{E}$ and $\mathbf{E}^{0}$.\newline
In order to show that $\mathbf{D}^{\pm}$ is unitarily equivalent to $\mathbf{T}^{\pm}$, we observe that by the properties of $\mathbf{\Omega}$ we can write $\mathbf{D}^{\pm}$ as
\begin{equation}
\mathbf{D}^{\pm}=\left(\begin{array}{ccc}
\mathbf{d}^{\pm}_{11}&\cdots&\mathbf{d}^{\pm}_{1k}\\
\vdots&\ddots&\vdots\\
\mathbf{d}^{\pm}_{k1}&\cdots&\mathbf{d}^{\pm}_{kk}\\
\end{array}\right)\in\mathbb{C}^{\left(N-k\right)\times k},
\end{equation}
wherein $\mathbf{d}^{-}_{ij}$ is the first column and ${\mathbf{d}^{+}_{ij}}'$ is the first row of the matrix block $\mathbf{\tilde{A}}_{ij}$ starting from the second entry. We have
\begin{equation}
\left(\begin{array}{c}\hspace{-0.5em}0\\\mathbf{d}^{-}_{ij}\end{array}\right)=
\mathbf{\tilde{A}}_{ij}\mathbf{f}_{n_j}-e^{-}_{ij}\mathbf{f}_{n_i}=
\mathbf{H}\left(\beta_i,\mathbf{w}_i\right)'\mathbf{t}^{-}_{ij},
\end{equation}
\begin{equation}
\left(0,{\mathbf{d}^{+}_{ij}}'\ \right)=
\mathbf{f}_{n_i}'\mathbf{\tilde{A}}_{ij}-e^{+}_{ij}\mathbf{f}_{n_j}'=
{\mathbf{t}^{+}_{ij}}'\mathbf{H}\left(\beta_j,\mathbf{w}_j\right),
\end{equation}
which shows that
\begin{equation}\label{OHT=0D}
\mathbf{\Omega}'\mathbf{\tilde{H}}'\mathbf{T}^{\pm}=
\left(\begin{array}{c}
\mathbf{0}^{\phantom{\pm}}\\\mathbf{D}^{\pm}
\end{array}\right).
\end{equation}
The eigenvector relation can be shown by applying $\mathbf{\tilde{H}}\mathbf{\Omega}$ from the left to
\begin{equation}
\lambda\mathbf{\hat{z}}=\mathbf{\hat{A}}\mathbf{\hat{z}}=
\mathbf{\Omega}'\mathbf{\tilde{H}}'\mathbf{A}\mathbf{z}.
\end{equation}
\end{proof}
\subsection{Deviation from Equitability}
Let $\lVert\cdot\rVert_{U}$ denote a unitarily invariant norm.
\begin{corollary}\label{cor1}
$$\lVert\mathbf{D}^{\pm}\rVert_{U}=\lVert\mathbf{T}^{\pm}\rVert_{U}.$$
\end{corollary}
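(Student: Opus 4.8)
The plan is to read the claim off directly from equation \eqref{OHT=0D}, which was already established inside the proof of theorem \eqref{theo1}, invoking only the defining feature of a unitarily invariant norm: that it is a function of its argument through the singular values alone. Consequently the whole argument reduces to tracking singular values across two elementary operations.

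First I would note that $\mathbf{\Omega}'\mathbf{\tilde{H}}'$ is unitary, being the product of the permutation matrix $\mathbf{\Omega}$ and the block diagonal matrix $\mathbf{\tilde{H}}=\mathbf{H}\left(\mathbf{W},\mathbf{V}\right)$, both unitary. Left multiplication by a unitary matrix preserves all singular values, so $\lVert\mathbf{\Omega}'\mathbf{\tilde{H}}'\mathbf{T}^{\pm}\rVert_{U}=\lVert\mathbf{T}^{\pm}\rVert_{U}$. Next I would use \eqref{OHT=0D} to identify $\mathbf{\Omega}'\mathbf{\tilde{H}}'\mathbf{T}^{\pm}$ with the stacked matrix $\left(\begin{smallmatrix}\mathbf{0}\\\mathbf{D}^{\pm}\end{smallmatrix}\right)$, in which the leading $\mathbf{0}$ is a $k\times k$ zero block. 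Prepending zero rows to $\mathbf{D}^{\pm}$ leaves its singular values unchanged, since the Gram matrix satisfies $\left(\begin{smallmatrix}\mathbf{0}\\\mathbf{D}^{\pm}\end{smallmatrix}\right)'\left(\begin{smallmatrix}\mathbf{0}\\\mathbf{D}^{\pm}\end{smallmatrix}\right)=\left(\mathbf{D}^{\pm}\right)'\mathbf{D}^{\pm}$; equivalently only additional zeros are appended to the multiset of singular values. Hence the stacked matrix and $\mathbf{D}^{\pm}$ carry the same unitarily invariant norm.

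Chaining the two equalities yields $\lVert\mathbf{T}^{\pm}\rVert_{U}=\lVert\left(\begin{smallmatrix}\mathbf{0}\\\mathbf{D}^{\pm}\end{smallmatrix}\right)\rVert_{U}=\lVert\mathbf{D}^{\pm}\rVert_{U}$, which is the assertion. I do not expect a genuine obstacle here, since the substantive work — that the unitary $\mathbf{\Omega}'\mathbf{\tilde{H}}'$ maps $\mathbf{T}^{\pm}$ exactly onto $\left(\begin{smallmatrix}\mathbf{0}\\\mathbf{D}^{\pm}\end{smallmatrix}\right)$ — was already discharged in \eqref{OHT=0D}. The only point deserving a word of care is the convention underlying $\lVert\cdot\rVert_{U}$: a unitarily invariant norm on rectangular matrices is a symmetric gauge function of the singular values, so that both left multiplication by a unitary matrix and padding with zero rows are norm preserving. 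I would state this explicitly so that the corollary is self contained.
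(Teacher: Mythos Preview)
Your argument is correct and is exactly the reasoning the paper has in mind: the corollary is stated without its own proof because it is an immediate consequence of theorem~\eqref{theo1}, which already establishes via \eqref{OHT=0D} that $\mathbf{D}^{\pm}$ and $\mathbf{T}^{\pm}$ share the same singular values, whence equality under any unitarily invariant norm. Your added remark that prepending zero rows preserves the singular values (via the Gram matrix) is precisely the missing link between \eqref{OHT=0D} and the norm identity, and is worth making explicit as you do.
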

\begin{corollary}\label{cor2} Let 
$\mathbf{T}^{-}_{\mathbf{\Theta}}=
\left(\mathbf{A}\mathbf{W}-
\mathbf{W}\mathbf{\Theta}\right)\mathbf{N}^{-1}$ and $\mathbf{T}^{+}_{\mathbf{\Theta}}=\mathbf{N}^{-1}\left(\mathbf{W}'\mathbf{A}-
\mathbf{\Theta}\mathbf{W}'\right)$ with $\mathbf{N}=\left(\mathbf{W}'\mathbf{W}\right)^{\frac{1}{2}}$. Then
$$\lVert\mathbf{T}^{\pm}\rVert_{U}=
\min_{\mathbf{\Theta}}\lVert\mathbf{T}^{\pm}_{\mathbf{\Theta}}\rVert_{U}.$$
The minimum is unique if $\lVert\cdot\rVert_{U}$ is a Schatten norm.
\end{corollary}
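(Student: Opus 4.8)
The plan is to read Corollary~\ref{cor2} as a best-approximation statement and to identify $\mathbf{T}^{-}$ with the residual of an orthogonal projection. Writing $\mathbf{M}=\mathbf{A}\mathbf{W}\mathbf{N}^{-1}$, the quantity to be minimized is $\lVert\mathbf{T}^{-}_{\mathbf{\Theta}}\rVert_{U}=\lVert\mathbf{M}-\mathbf{W}\mathbf{\Theta}\mathbf{N}^{-1}\rVert_{U}$. Since $\mathbf{N}^{-1}$ is invertible and $\mathbf{\Theta}$ ranges over all of $\mathbb{C}^{k\times k}$, the product $\mathbf{W}\mathbf{\Theta}\mathbf{N}^{-1}$ ranges over exactly those $N\times k$ matrices whose column space is contained in $\operatorname{range}(\mathbf{W})$. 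I would therefore minimize $\lVert\mathbf{M}-\mathbf{C}\rVert_{U}$ over all $\mathbf{C}$ with columns in $\operatorname{range}(\mathbf{W})$. Let $\mathbf{P}=\mathbf{W}(\mathbf{W}'\mathbf{W})^{-1}\mathbf{W}'$ be the orthogonal projector onto $\operatorname{range}(\mathbf{W})$, which is well defined by admissibility, and $\mathbf{P}^{\perp}=\mathbf{I}_{N}-\mathbf{P}$. A direct computation gives $\mathbf{W}\mathbf{E}^{-}\mathbf{N}^{-1}=\mathbf{W}(\mathbf{W}'\mathbf{W})^{-1}\mathbf{W}'\mathbf{A}\mathbf{W}\mathbf{N}^{-1}=\mathbf{P}\mathbf{M}$, so that $\mathbf{T}^{-}=\mathbf{M}-\mathbf{P}\mathbf{M}=\mathbf{P}^{\perp}\mathbf{M}$; that is, the choice $\mathbf{\Theta}=\mathbf{E}^{-}$ is precisely the orthogonal projection and $\mathbf{T}^{-}$ is its residual.

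The core estimate is then a Pythagorean argument. For an arbitrary competitor $\mathbf{C}$ with columns in $\operatorname{range}(\mathbf{W})$, set $\mathbf{R}=\mathbf{P}\mathbf{M}-\mathbf{C}$ and $\mathbf{S}=\mathbf{P}^{\perp}\mathbf{M}=\mathbf{T}^{-}$, so that $\mathbf{M}-\mathbf{C}=\mathbf{R}+\mathbf{S}$ where $\mathbf{R}$ has column space in $\operatorname{range}(\mathbf{W})$ and $\mathbf{S}$ has column space in its orthogonal complement. Hence $\mathbf{R}'\mathbf{S}=\mathbf{S}'\mathbf{R}=\mathbf{0}$ and $(\mathbf{M}-\mathbf{C})'(\mathbf{M}-\mathbf{C})=\mathbf{R}'\mathbf{R}+\mathbf{S}'\mathbf{S}\succeq\mathbf{S}'\mathbf{S}$. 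By Weyl's monotonicity theorem for the positive semidefinite ordering this yields $\sigma_{i}(\mathbf{M}-\mathbf{C})\ge\sigma_{i}(\mathbf{S})$ for every $i$, and since a unitarily invariant norm is a symmetric gauge function of the singular values and hence monotone in them, I conclude $\lVert\mathbf{M}-\mathbf{C}\rVert_{U}\ge\lVert\mathbf{S}\rVert_{U}=\lVert\mathbf{T}^{-}\rVert_{U}$, with equality attained at $\mathbf{C}=\mathbf{P}\mathbf{M}$, i.e. $\mathbf{\Theta}=\mathbf{E}^{-}$. The rear case follows at once by transposition: since $(\mathbf{T}^{+}_{\mathbf{\Theta}})'=(\mathbf{A}'\mathbf{W}-\mathbf{W}\mathbf{\Theta}')\mathbf{N}^{-1}$ and unitarily invariant norms are invariant under conjugate transposition, the rear problem is the front problem for $\mathbf{A}'$ and $\mathbf{\Theta}'$, minimized at $\mathbf{\Theta}=\mathbf{E}^{+}$ with residual $\mathbf{T}^{+}$.

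For uniqueness under a Schatten $p$-norm I would argue from the termwise inequality $\sigma_{i}(\mathbf{R}+\mathbf{S})\ge\sigma_{i}(\mathbf{S})$. If $p<\infty$ and $\lVert\mathbf{R}+\mathbf{S}\rVert_{p}=\lVert\mathbf{S}\rVert_{p}$, then because $t\mapsto t^{p}$ is strictly increasing on $[0,\infty)$, equality of the sums $\sum_{i}\sigma_{i}^{p}(\mathbf{R}+\mathbf{S})=\sum_{i}\sigma_{i}^{p}(\mathbf{S})$ forces $\sigma_{i}(\mathbf{R}+\mathbf{S})=\sigma_{i}(\mathbf{S})$ for all $i$. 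Summing the squares then gives $\operatorname{tr}(\mathbf{R}'\mathbf{R})+\operatorname{tr}(\mathbf{S}'\mathbf{S})=\operatorname{tr}(\mathbf{S}'\mathbf{S})$, hence $\operatorname{tr}(\mathbf{R}'\mathbf{R})=0$ and $\mathbf{R}=\mathbf{0}$; since $\mathbf{W}$ has full column rank, $\mathbf{C}=\mathbf{P}\mathbf{M}$ then determines $\mathbf{\Theta}$ uniquely. The main obstacle I anticipate is the rigorous passage from the positive semidefinite ordering to the termwise singular value domination and on to the norm inequality, where Weyl monotonicity and the monotonicity of symmetric gauge functions must be combined carefully; the uniqueness step is where the Schatten hypothesis is genuinely used, and I would note that the strict-monotonicity argument covers $1\le p<\infty$ but that the operator norm $p=\infty$ must be excluded or treated separately, since there the best approximant need not be unique.
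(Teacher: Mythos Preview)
Your proof is correct and takes a genuinely different route from the paper's. The paper exploits the concrete unitary $\mathbf{\Omega}'\mathbf{\tilde{H}}'$ already built in Theorem~\ref{theo1}: from $\mathbf{\Omega}'\mathbf{\tilde{H}}'\mathbf{W}=\left(\begin{smallmatrix}\mathbf{V}'\mathbf{N}\\\mathbf{0}\end{smallmatrix}\right)$ one obtains
\[
\lVert\mathbf{T}^{-}_{\mathbf{\Theta}}\rVert_U
=\left\lVert\begin{pmatrix}\mathbf{E}-\mathbf{V}'\mathbf{N}\mathbf{\Theta}\mathbf{N}^{-1}\mathbf{V}\\[2pt]\mathbf{D}^{-}\end{pmatrix}\right\rVert_U,
\]
and the minimum is read off by zeroing the top block, giving $\lVert\mathbf{D}^{-}\rVert_U=\lVert\mathbf{T}^{-}\rVert_U$ via Corollary~\ref{cor1}. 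Your argument instead works abstractly with the orthogonal projector $\mathbf{P}$ onto $\operatorname{range}(\mathbf{W})$ and a Pythagorean/Weyl step; in effect your $\mathbf{P}^{\perp}$ plays the role that the lower block of $\mathbf{\Omega}'\mathbf{\tilde{H}}'$ plays in the paper, and both proofs ultimately rest on the same monotonicity of unitarily invariant norms under appending an orthogonal block. The paper's version ties the corollary tightly to the block triangularization already constructed; yours is self-contained and does not invoke Theorem~\ref{theo1} at all. Your treatment of uniqueness is also more careful: you correctly note that the argument covers Schatten $p$ with $1\le p<\infty$ but that the operator norm $p=\infty$ need not have a unique minimizer, a caveat the paper does not make explicit.
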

\begin{proof}Applying $\mathbf{\Omega}'\mathbf{\tilde{H}}'$ from the left and $\mathbf{V}'$ from the right to $\mathbf{T}^{-}_{\mathbf{\Theta}}$
yields
\begin{align}
\lVert\mathbf{\Omega}'\mathbf{\tilde{H}}'\mathbf{T}^{-}_{\mathbf{\Theta}}\mathbf{V}'\rVert_U&=
\lVert\mathbf{\bar{A}}
\mathbf{\Omega}'\mathbf{\tilde{H}}'
\mathbf{W}\mathbf{N}^{-1}\mathbf{V}-
\mathbf{\Omega}'\mathbf{\tilde{H}}'\mathbf{W}\mathbf{\Theta}\mathbf{N}^{-1}\mathbf{V}\rVert_U\nonumber\\
&=\lVert\left(\begin{array}{c}\mathbf{E}^{\phantom{-}}\\\mathbf{D}^{-}\end{array}\right)
-\left(\begin{array}{c}
\mathbf{V}'\mathbf{N}
\mathbf{\Theta}\mathbf{N}^{-1}\mathbf{V}\\
\mathbf{0}\end{array}\right)\rVert_U
\end{align}
using $\mathbf{\Omega}'\mathbf{\tilde{H}}'\mathbf{W}=
\left(\begin{array}{c}
\mathbf{V}'\mathbf{N}\\
\mathbf{0}\end{array}\right)
$.
The last term is readily minimized for
\begin{equation}
\Theta=\mathbf{N}^{-1}\mathbf{V}\mathbf{E}\mathbf{V}'\mathbf{N}=\left(\mathbf{W}'\mathbf{W}\right)^{-\frac{1}{2}}\mathbf{E}^{0}\left(\mathbf{W}'\mathbf{W}\right)^{\frac{1}{2}}=\mathbf{E}^{-}.
\end{equation}
Obviously, the minimization is unique for several $\lVert\cdot\rVert_{U}$ including the Schatten norms. A similar proof applies for $\mathbf{T}^{+}$.
\end{proof}
The idea underlying the last proof is essentially the same as in \cite[proof~of~theorem~11]{Dax20101234}.
A particular useful choice for ${\lVert\cdot\rVert_{U}}$ might be the Frobenius norm, which upper bounds the spectral norm. Its square is simply the sum of the squared norms of the deviation vectors. One may also think of other characterizations for approximate equitable partitions which have moderate computational costs, for instance the number of nonzero columns of $\mathbf{T}^{\pm}$, which upper bounds the rank.
\section{Discussion and Remarks}\label{secDMR}
\subsection{Relating Equitability Deviation and Spectral Deviation}
Since $\mathbf{\hat{A}}$ and $\mathbf{A}$ are unitarily similar and by corollaries \eqref{cor1} and \eqref{cor2} of theorem \eqref{theo1}, we may in a sense 'measure' the deviation of a partition from being equitable by using a suitable unitarily invariant norm of $\mathbf{T}^{\pm}$, yielding a norm of $\mathbf{D}^{\pm}$, which in turn may serve as a measure for the deviation of the joint eigenvalue sets or the joint singular value sets of $\mathbf{E}$ and $\mathbf{F}$ from the respective values of $\mathbf{A}$.\newline
As an example we consider the spectral norm and the eigenvalue bound of Weyl for Hermitian matrices. Assuming Hermiticity we may set
$\mathbf{D}^{\pm}=\mathbf{D}$ and 
\begin{equation}
\mathbf{\hat{A}}=\left(\begin{array}{cc}\mathbf{E}&\mathbf{0}\\
\mathbf{0}&\mathbf{F}\end{array}\right)+\left(\begin{array}{cc}\mathbf{0}&\mathbf{D}'\\
\mathbf{D}&\mathbf{0}\end{array}\right).
\end{equation}
Let $\mu_1\leq\ldots\leq\mu_N$ be the joint spectrum of Hermitian $\mathbf{E}$ and $\mathbf{F}$, $\lambda_1\leq\ldots\leq\lambda_N$ the eigenvalues of $\mathbf{A}$ and let $\tau_{\text{spec}}$ be the largest singular value of $\mathbf{D}$.
We have
\begin{equation}
\left|\mu_i-\lambda_i\right|\leq\tau_{\text{spec}}\ ,\ 1\leq i\leq N
\end{equation}
by the Weyl inequalities. Many more pertubation bounds on eigenvalues and singular values and thier corresponding vectors are feasible, e.g. \cite{EI98AbsPer},\cite{Deif1995403},\cite{E85OptBou}.

\subsection{Cognate Concepts}
The notion of quasi-block-stochastic matrices of Kuich~\cite{K68QuaBlo} as a generalization of quasi-stochastic matrices~\cite{H55QuaSto} bears a close resemblance to our notion of equitability. A minor difference is that for quasi-block-stochastic matrices it is required that the first entry of each $\mathbf{v}_i$ has to be $1$. Kuich also describes how to exploit this structure to triangularize a (real) matrix by a (real, in general not unitary) similarity transformation using a theorem of Haynsworth~\cite{H59AppThe}.
Another similar but less general concept is used by Fiol and Carriga and is called \emph{pseudo-regular} partitions. It considers a positive eigenvector $\mathbf{v}$ of binary matrices~\cite[pp. 278/9]{F99EigInt}. The partition $\Pi$ of the matrix is pseudo-regular if $\mathbf{v}$ and $\Pi$ induce a (weighted) equitable partition. Since $\mathbf{v}$ is fixed up to a positive scale factor, the pseudo-quotient (i.e. front divisor) is unique.\newline
There are some more techniques in network analysis which can be described as variations of \eqref{equiCentral} and which are used to partition the node set of a graph (=assigning roles) according to structural properties and to derive a smaller graph (the quotient or image graph) which gives a condensed representation of essential relations between the cells (=roles) of that partition.
Some of those are without apparent regard to the spectrum. For instance, Kate and Ravindran introduced \emph{epsilon equitable} partitions for (an adjacency matrix $\mathbf{A}$ of) a simple graph~\cite{KR09EpsEqu}. Let $\Pi=\left(c_1,\ldots,c_k\right)$ be a partition of the node set of $\mathbf{A}$. Let $\mathbf{A}_{ij}$ be induced by the $i$-th row cell and the $j$-th column cell. Let $\mathbf{r}_{ij}=\mathbf{A}_{ij}\mathbf{j}_{n_j}$ be a column vector of length $n_i=\left|c_i\right|$. If 
\begin{equation}
\forall\ i,j\in\left\{1,\ldots,k\right\}\quad\max\limits_{1\leq v,w\leq n_i} \left|\mathbf{r}_{ij,v}-\mathbf{r}_{ij,w}\right|\leq\epsilon
\end{equation}
then $\Pi$ is called $\epsilon$-equitable. The ordinary equitable partition arises for $\epsilon=0$.
Another variation of \eqref{equiCentral} can be employed to describe the concept of \emph{regular equivalence}~\cite{EB94RegEqu}, which is defined by the restriction that for a partition $\Pi$ any vector $\mathbf{r}_{ij}=\mathbf{A}_{ij}\mathbf{j}_{n_j}$ must have either no zero entry or all entries zero i.e. 
\begin{equation}
\forall\ i,j\in\left\{1,\ldots,k\right\}\quad\prod_v\mathbf{r}_{ij,v}=0\Rightarrow\sum_v\left|\mathbf{r}_{ij,v}\right|=0.
\end{equation}
\subsection{Finding Equitable Partitions}
There are several algorithms for finding ordinary equitable partitions of graphs and matrices, for instance ~\cite{B99CompEP},~\cite{GKMS14}.\newline 
We sketch the most often employed top-down approach made suitable to the case of finding an ordinary front equitable partition of a complex matrix $\mathbf{A}$. At each step one considers a temporary partition (initially often the single cell partition) and (sequentially) subdivides any cell $c_i$ for any $j$ according to the entries of $\mathbf{A}_{ij}\mathbf{j}_{n_j}$, called colors, s.t. each subcell is induced by a unique color, until this subdivision is non trivial, resulting in a refined partition. One iterates until any feasible subdivision is trivial, i.e. the final partition is the unique coarsest front equitable refinement (w.r.t. to the initial partition). Of course, this can be adapted for the weighted case. However, if the weight vector $\mathbf{w}$ has no zero entries one may employ the sketched procedure for the unweighted case readily by considering the matrix $\operatorname{diag}\left(\mathbf{w}\right)^{-1}\mathbf{A}\operatorname{diag}\left(\mathbf{w}\right)$. This follows by left multiplication of $\operatorname{diag}\left(\mathbf{w}\right)^{-1}$ to the equitability condition
\begin{equation}
\mathbf{A}\operatorname{diag}\left(\mathbf{w}\right)\mathbf{B}=\operatorname{diag}\left(\mathbf{w}\right)\mathbf{B}\mathbf{E}^{-}.
\end{equation}
In general, the choice of a weight vector $\mathbf{w}$ may be guided by insights into the problem underlying the considered matrix $\mathbf{A}$. In search for $\mathbf{w}$, one may also exploit that the columns of the weighted indicator matrix $\mathbf{W}$ are a basis for the linear span of all eigenvectors of $\mathbf{A}$ corresponding to eigensolutions of $\mathbf{E}^{-}$. As an example, let $\mathbf{x}$ and $\mathbf{y}$ be two such eigenvectors for different eigenvalues. Considering them separately using the top down approach above one finds the single cell partition since $\mathbf{x}$ and $\mathbf{y}$ are eigenvectors. This may be avoided by using a non trivial linear combination, which lies in the linear span of the columns of $\mathbf{W}$ but is not an eigenvector.\newline
How to find partitions with suitably small but non zero deviation from equitability is out of the scope of this article.
\appendix
\addtocontents{toc}{\protect\contentsline{chapter}{Appendix:}{}}
\section{Generalization as a Singular Value Decomposition}
Our proposed method for block triangularization can be described as an employment of a one-step singular value decomposition (SVD) of the weighted indicator matrix $\mathbf{W}$ as
\begin{equation}
\mathbf{W}=\left[\mathbf{\tilde{H}}\mathbf{\Omega}\right]\left(\begin{array}{c}\mathbf{N}\\\mathbf{0}\end{array}\right)\mathbf{V}'
\end{equation}
wherein the square diagonal matrix $\mathbf{N}$ contains the singular values of $\mathbf{W}$ and $\mathbf{V}=\operatorname{diag}\left(\beta_1,\ldots,\beta_k\right)$ is unitary diagonal. In deed, if we interpret $\mathbf{f}_{n_i}$ and the vector blocks $\mathbf{w}_i$ as matrices in $\mathbb{C}^{n_i\times1}$, then $\mathbf{w}_i$ has the SVD
\begin{equation}
\mathbf{w}_i=
\mathbf{H}\left(\mathbf{w}_i,\beta_i\right)
\frac{\lVert\mathbf{w}_i\rVert}{\beta_i}\mathbf{f}_{n_i}=
\mathbf{H}\left(\mathbf{w}_i,\beta_i\right)
\left(\begin{array}{c}\lVert\mathbf{w}_i\rVert\\\mathbf{0}\end{array}\right)
\overline{\beta_i}.
\end{equation}
In that view, one may obtain a generalization by replacing the non vanishing vector blocks $\mathbf{w}_i$ by rectangular matrix blocks $\mathbf{W}_i$ with maximal column rank. In the remainder of this section we build on this idea and derive an approximate block triangularization of a rectangular matrix $\mathbf{A}\in\mathbb{C}^{m\times n}$, using given SVDs of a pair of block diagonal matrices with maximal column rank, acting on the rows and columns of $\mathbf{A}$ respectively and separately.\newline
For notational convenience we define a $2\times 1$ block matrix with empty lower block and the identity matrix in the square upper block.
\begin{definition}\label{defInr}
Let $r$ and $n$ be positive integers with $r\leq n$.
$$\mathbf{I}_n^r=
\left(\begin{array}{c}\mathbf{I}_r\\\mathbf{0}\end{array}\right)\in\left\{0,1\right\}^{n\times r}.$$
\end{definition}
We may identify $\mathbf{I}_{n}^1=\mathbf{f}_{n}$. As a block diagonal generalization we define
\begin{definition}\label{defInrBD} Let $\mathbf{r}=\left(r_1,\ldots,r_k\right)$ and $\mathbf{n}=\left(n_1,\dots,n_k\right)$ be ordered sequences of positive integers, s.t. $r_i\leq n_i$.
$$\mathbf{I}_{\mathbf{n}}^{\mathbf{r}}=\operatorname{diag}\left(\mathbf{I}_{n_1}^{r_1},\ldots,\mathbf{I}_{n_k}^{r_k}\right).$$
\end{definition}
We will also utilize the following permutation.
\begin{definition}\label{defOmGen} Let $\mathbf{r}=\left(r_1,\dots,r_k\right)$ and $\mathbf{n}=\left(n_1,\dots,n_k\right)$ be ordered sequences of positive integers s.t. $\forall\ i\in\left\{1,\ldots,k\right\}\ r_i\leq n_i$. Let $r=\sum_ir_i$ and $n=\sum_in_i$. Then $\Omega_{\mathbf{n}}^{\mathbf{r}}: \left\{1,\ldots,n\right\}\to\left\{1,\ldots,n\right\}$ is defined by
$$\Omega_{\mathbf{n}}^{\mathbf{r}}\left(s_i+\sum\limits_{j=1}^{i-1}n_j\right)=\left\{\begin{array}{cc}
s_i+\sum\limits_{j=1}^{i-1}r_j&,0<s_i\leq r_i\\
s_i+r+\sum\limits_{j=1}^{i-1}\left(n_j-r_j\right)&,r_i<s_i\leq n_i\end{array}\right.$$
with $i\in\left\{1,\ldots,k\right\}$.
\end{definition}
$\Omega_{\mathbf{n}}^{\mathbf{r}}$ maps the first $r_i$ elements of cell $i$ into the first $r$ elements.
\begin{proposition}\label{PropOmega} In the notation of definitions \eqref{defInrBD} and \eqref{defOmGen} above, let 
$\mathbf{\Omega}$ be the permutation matrix corresponding to $\Omega_{\mathbf{n}}^{\mathbf{r}}$. Then $\mathbf{\Omega}'\mathbf{I}_{\mathbf{n}}^{\mathbf{r}}=\mathbf{I}_n^r.$
\end{proposition}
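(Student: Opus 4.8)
The plan is to prove the identity by a direct column-by-column comparison of the two $0/1$ matrices, both of which lie in $\{0,1\}^{n\times r}$. The essential observation is that, by definition~\eqref{defInrBD}, each column of $\mathbf{I}_{\mathbf{n}}^{\mathbf{r}}$ is a single standard basis vector of $\mathbb{C}^n$, so that left multiplication by the permutation matrix $\mathbf{\Omega}'$ merely relabels the row index carrying that $1$. Consequently it suffices to locate, for every column, the position of its unique nonzero entry and to check that $\mathbf{\Omega}'$ moves it onto the diagonal of the top $r\times r$ block while leaving the bottom $n-r$ rows empty, which is exactly the structure of $\mathbf{I}_n^r$ from definition~\eqref{defInr}.

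First I would fix global indices. Writing a generic column index in its block form $b=t+\sum_{j=1}^{i-1}r_j$ with $1\le t\le r_i$, the block-diagonal shape $\operatorname{diag}\left(\mathbf{I}_{n_1}^{r_1},\ldots,\mathbf{I}_{n_k}^{r_k}\right)$ places the single $1$ of this column in the global row $\rho=t+\sum_{j=1}^{i-1}n_j$; that is, the $b$-th column of $\mathbf{I}_{\mathbf{n}}^{\mathbf{r}}$ equals the $\rho$-th standard basis vector $\mathbf{e}_{\rho}$. Since $t\le r_i$, the first branch of definition~\eqref{defOmGen} applies and gives $\Omega_{\mathbf{n}}^{\mathbf{r}}(\rho)=t+\sum_{j=1}^{i-1}r_j=b$. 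Thus, as $(i,t)$ ranges over all cells and all $t\le r_i$, the pair $(\rho,b)$ ranges bijectively over the rows that carry a nonzero entry and the full column index set $\left\{1,\ldots,r\right\}$.

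Finally I would combine these facts. With the identification of $\Omega_{\mathbf{n}}^{\mathbf{r}}$ with its permutation matrix, transposition realises the relabelling $\mathbf{\Omega}'\mathbf{e}_{\rho}=\mathbf{e}_{\Omega_{\mathbf{n}}^{\mathbf{r}}(\rho)}$, so the $b$-th column of $\mathbf{\Omega}'\mathbf{I}_{\mathbf{n}}^{\mathbf{r}}$ is $\mathbf{e}_{\Omega_{\mathbf{n}}^{\mathbf{r}}(\rho)}=\mathbf{e}_b$, exactly the $b$-th column of $\mathbf{I}_n^r$. Since every nonzero now sits in a row with index $b\le r$, the rows $r+1,\ldots,n$ vanish and the identity $\mathbf{\Omega}'\mathbf{I}_{\mathbf{n}}^{\mathbf{r}}=\mathbf{I}_n^r$ follows. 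The second branch of definition~\eqref{defOmGen}, governing the $n-r$ discarded positions, is never reached because $\mathbf{I}_{\mathbf{n}}^{\mathbf{r}}$ has no nonzero entry in those rows; hence the only point requiring care is the bookkeeping of the block offsets together with fixing the permutation-matrix convention so that $\mathbf{\Omega}'$ implements the forward map $\Omega_{\mathbf{n}}^{\mathbf{r}}$ rather than its inverse. That convention check, and not any genuine difficulty, is the main obstacle.
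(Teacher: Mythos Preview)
Your argument is correct: the column-by-column bookkeeping is exactly right, and you correctly flag the permutation-matrix convention as the only point requiring care (the paper's usage of $\mathbf{\Omega}'$ on the left to relocate row $\rho$ to row $\Omega_{\mathbf{n}}^{\mathbf{r}}(\rho)$ confirms the convention $\mathbf{\Omega}'\mathbf{e}_{\rho}=\mathbf{e}_{\Omega_{\mathbf{n}}^{\mathbf{r}}(\rho)}$ you assume). The paper itself states this proposition without proof, treating it as an immediate consequence of definitions~\eqref{defInrBD} and~\eqref{defOmGen}; your direct verification supplies precisely the details the paper omits.
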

\begin{definition}\label{defAllGen}
Let $\mathbf{W}^{-}\in\mathbb{C}^{m\times q}$ be a block diagonal matrix with $l$ diagonal blocks $\mathbf{W}^{-}_i\in\mathbb{C}^{m_i\times q_i}$ of rank $q_i$ and let $\mathbf{W}^{+}\in\mathbb{C}^{n\times r}$ be a block diagonal matrix with $k$ diagonal blocks $\mathbf{W}^{+}_i\in\mathbb{C}^{n_i\times r_i}$ each of rank $r_i$.
Let singular value decompositions for the $\mathbf{W}^{-}_i$ be given by 
$$\mathbf{W}^{-}_i=\mathbf{U}^{-}_i\mathbf{S}^{-}_i{\mathbf{V}^{-}_i}'=
\mathbf{U}^{-}_i\left(\mathbf{I}_{m_i}^{q_i}\mathbf{N}^{-}_i\right){\mathbf{V}^{-}_i}'$$
with square unitary $\mathbf{U}^{-}_i\in\mathbb{C}^{m_i\times m_i}$ and $\mathbf{V}^{-}_i\in\mathbb{C}^{q_i\times q_i}$, and with $\mathbf{S}^{-}_i=\mathbf{I}_{m_i}^{q_i}\mathbf{N}^{-}_i$ wherein $\mathbf{N}^{-}_i\in\mathbb{R}^{q_i\times q_i}$ is a square diagonal matrix with positive diagonal elements. 
Let $\mathbf{U}^{-}\in\mathbb{C}^{m\times m}$, $\mathbf{S}^{-}\in\mathbb{C}^{m\times q}$, $\mathbf{N}^{-}\in\mathbb{R}^{q\times q}$, and $\mathbf{V}^{-}\in\mathbb{C}^{q\times q}$ be block diagonal with $l$ diagonal blocks given by $\mathbf{U}^{-}_i$, $\mathbf{S}^{-}_i$, $\mathbf{N}^{-}_i$, and $\mathbf{V}^{-}_i$, respectively. Let $\mathbf{\Omega
 }^{-}$ be the permutation matrix corresponding to the permutation $\Omega^{\left(q_1,\dots,q_l\right)}_{\left(m_1,\dots,m_l\right)}$ s.t. ${\mathbf{\Omega}^{-}}'\mathbf{S}^{-}=\mathbf{I}_m^q\mathbf{N}^{-}$.
This induces a singular value decomposition of $\mathbf{W}^{-}$ as $$\mathbf{W}^{-}=\mathbf{U}^{-}\left({\mathbf{\Omega}^{-}}'\mathbf{S}^{-}\right){\mathbf{V}^{-}}'=\mathbf{U}^{-}\left(\mathbf{I}_m^q\mathbf{N}^{-}\right){\mathbf{V}^{-}}'.$$
Let the corresponding relations hold for $\mathbf{W}^{+}$
and let $\mathbf{A}\in\mathbb{C}^{m\times n}$.
Define the \emph{Rayleigh quotient} $\mathbf{E}^{\mathrm{0}}\in\mathbb{C}^{q\times r}$ as a block matrix with 
\begin{equation}
\mathbf{E}^{0}=\left(\begin{array}{ccc}
\mathbf{E}^{0}_{11}&\cdots&\mathbf{E}^{0}_{1n}\\
\vdots&\ddots&\vdots\\
\mathbf{E}^{0}_{1m}&\cdots&\mathbf{E}^{0}_{mn}
\end{array}\right)
,\quad
\mathbf{E}^{0}_{ij}=
\mathbf{V}^{-}_{i}{\mathbf{I}^{q_i}_{m_i}}'{\mathbf{U}^{-}_{i}}'
\mathbf{A}_{ij}
\mathbf{U}^{+}_{j}\mathbf{I}^{r_j}_{n_j}{\mathbf{V}^{+}_{j}}'\in\mathbb{C}^{q_i\times r_j}.\nonumber
\end{equation}
The \emph{front} and {rear deviation matrices}, $\mathbf{T}^{-}\in\mathbb{C}^{m\times r}$ and $\mathbf{T}^{+}\in\mathbb{C}^{n\times q}$ respectively, are block matrices with 
\begin{align}
\mathbf{T}^{-}_{ij}=&
\phantom{(}\mathbf{A}_{ij}\phantom{)}\mathbf{U}^{+}_{j}\mathbf{I}^{r_j}_{n_j}{\mathbf{V}^{+}_{j}}'-
\mathbf{U}^{-}_{i}\mathbf{I}^{q_i}_{m_i}{\mathbf{V}^{-}_{i}}'\phantom{(}\mathbf{E}^{0}_{ij}\phantom{)'}\in\mathbb{C}^{m_i\times r_i},\nonumber\\
\mathbf{T}^{+}_{ij}=&
\left(\mathbf{A}_{ij}\right)'\mathbf{U}^{-}_{i}\mathbf{I}^{q_i}_{m_i}{\mathbf{V}^{-}_{i}}'-
\mathbf{U}^{+}_{j}\mathbf{I}^{r_j}_{n_j}{\mathbf{V}^{+}_{j}}'\left(\mathbf{E}^{0}_{ij}\right)'\in\mathbb{C}^{n_i\times q_i}.\nonumber
\end{align}
\end{definition}
\begin{proposition}
In the notation of definition \eqref{defAllGen}, $\mathbf{E}^{0}$, $\mathbf{T}^{-}$, and $\mathbf{T}^{+}$ are identical for all singular value decompositions of $\mathbf{W}^{-}$ and $\mathbf{W}^{+}$ which obey the block diagonal form.
\end{proposition}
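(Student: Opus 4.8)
The plan is to rewrite $\mathbf{E}^{0}$, $\mathbf{T}^{-}$ and $\mathbf{T}^{+}$ so that they visibly depend only on $\mathbf{A}$ and on the blocks $\mathbf{W}^{-}_i$, $\mathbf{W}^{+}_j$, with no surviving reference to the chosen factorizations. First I would note that neither the singular values $\mathbf{N}^{\pm}$ nor the trailing columns of the full unitary factors ever appear: every occurrence of $\mathbf{U}^{-}_i$ sits in the product $\mathbf{U}^{-}_i\mathbf{I}_{m_i}^{q_i}$, i.e. only its leading $q_i$ columns are used, and symmetrically for $\mathbf{U}^{+}_j$. Abbreviating the recurring block $\mathbf{P}^{-}_i=\mathbf{U}^{-}_i\mathbf{I}_{m_i}^{q_i}{\mathbf{V}^{-}_i}'$ and $\mathbf{P}^{+}_j=\mathbf{U}^{+}_j\mathbf{I}_{n_j}^{r_j}{\mathbf{V}^{+}_j}'$, the definitions collapse to
\begin{align}
\mathbf{E}^{0}_{ij}&={\mathbf{P}^{-}_i}'\mathbf{A}_{ij}\mathbf{P}^{+}_j,\nonumber\\
\mathbf{T}^{-}_{ij}&=\mathbf{A}_{ij}\mathbf{P}^{+}_j-\mathbf{P}^{-}_i\mathbf{E}^{0}_{ij},\nonumber\\
\mathbf{T}^{+}_{ij}&=\left(\mathbf{A}_{ij}\right)'\mathbf{P}^{-}_i-\mathbf{P}^{+}_j\left(\mathbf{E}^{0}_{ij}\right)'.\nonumber
\end{align}
It therefore suffices to prove that each $\mathbf{P}^{\pm}_i$ is a function of its block alone.

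The crux is to identify $\mathbf{P}^{-}_i$ with the unitary factor of the polar decomposition of $\mathbf{W}^{-}_i$. Since the columns of $\mathbf{U}^{-}_i\mathbf{I}_{m_i}^{q_i}$ are orthonormal and $\mathbf{N}^{-}_i$ is invertible, a one-line computation gives ${(\mathbf{W}^{-}_i)}'\mathbf{W}^{-}_i=\mathbf{V}^{-}_i(\mathbf{N}^{-}_i)^{2}{\mathbf{V}^{-}_i}'$, hence $\bigl({(\mathbf{W}^{-}_i)}'\mathbf{W}^{-}_i\bigr)^{-\frac{1}{2}}=\mathbf{V}^{-}_i(\mathbf{N}^{-}_i)^{-1}{\mathbf{V}^{-}_i}'$, and consequently
\begin{align}
\mathbf{W}^{-}_i\bigl({(\mathbf{W}^{-}_i)}'\mathbf{W}^{-}_i\bigr)^{-\frac{1}{2}}
&=\mathbf{U}^{-}_i\mathbf{I}_{m_i}^{q_i}\mathbf{N}^{-}_i{\mathbf{V}^{-}_i}'\mathbf{V}^{-}_i(\mathbf{N}^{-}_i)^{-1}{\mathbf{V}^{-}_i}'\nonumber\\
&=\mathbf{U}^{-}_i\mathbf{I}_{m_i}^{q_i}{\mathbf{V}^{-}_i}'=\mathbf{P}^{-}_i.\nonumber
\end{align}
The left-hand side is built solely from $\mathbf{W}^{-}_i$, so $\mathbf{P}^{-}_i$ is the same for every block-diagonal SVD; the same computation handles $\mathbf{P}^{+}_j$. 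Substituting back into the displayed formulas then shows that $\mathbf{E}^{0}$, $\mathbf{T}^{-}$ and $\mathbf{T}^{+}$ are SVD-independent.

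I expect the only genuine obstacle to be bookkeeping about \emph{which} freedom has to be controlled: within each block one may independently rescale singular vectors by phases, rotate inside degenerate singular subspaces, reorder the singular values, and choose the trailing columns of the full $\mathbf{U}^{-}_i$ arbitrarily. One could address these one at a time by writing any two factorizations as $\mathbf{U}^{-}_i\mathbf{I}_{m_i}^{q_i}\mapsto\mathbf{U}^{-}_i\mathbf{I}_{m_i}^{q_i}\mathbf{Q}_i$, $\mathbf{V}^{-}_i\mapsto\mathbf{V}^{-}_i\mathbf{Q}_i$ with $\mathbf{Q}_i$ unitary commuting with $\mathbf{N}^{-}_i$, so that $\mathbf{P}^{-}_i\mapsto\mathbf{U}^{-}_i\mathbf{I}_{m_i}^{q_i}\mathbf{Q}_i\mathbf{Q}_i'{\mathbf{V}^{-}_i}'=\mathbf{P}^{-}_i$ because $\mathbf{Q}_i\mathbf{Q}_i'=\mathbf{I}$. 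This is precisely why I would lead with the polar-factor identity instead: its right-hand side is patently free of any such choice and dispatches every case at once.
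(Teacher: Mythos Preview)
Your argument is correct and is essentially the same as the paper's: both hinge on the polar-factor identity $\mathbf{U}^{-}\mathbf{I}^{\mathbf{q}}_{\mathbf{m}}{\mathbf{V}^{-}}'=\mathbf{W}^{-}\bigl({\mathbf{W}^{-}}'\mathbf{W}^{-}\bigr)^{-\frac{1}{2}}$ (and its $+$ counterpart), which expresses the only SVD-dependent ingredient in terms of $\mathbf{W}^{\pm}$ alone. You carry this out block-wise with a bit more detail, while the paper states the global identity directly and then writes out the resulting closed-form expressions for $\mathbf{E}^{0}$ and $\mathbf{T}^{\pm}$; the substance is identical.
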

\begin{proof} $\mathbf{E}^{0}$ and $\mathbf{T}^{\pm}$ can be entirely expressed in terms of $\mathbf{W}^{\pm}$ since
\begin{equation}
\mathbf{U}^{-}\mathbf{I}^{\mathbf{q}}_{\mathbf{m}}{\mathbf{V}^{-}}'=\mathbf{W}^{-}\left({\mathbf{W}^{-}}'\mathbf{W}^{-}\right)^{-\frac{1}{2}}
\text{and}\ \ 
\mathbf{U}^{+}\mathbf{I}^{\mathbf{r}}_{\mathbf{n}}{\mathbf{V}^{+}}'=\mathbf{W}^{+}\left({\mathbf{W}^{+}}'\mathbf{W}^{+}\right)^{-\frac{1}{2}}.
\end{equation}
\begin{align}
\mathbf{E}^{0}&=\left({\mathbf{W}^{-}}'\mathbf{W}^{-}\right)^{-\frac{1}{2}}{\mathbf{W}^{-}}'\mathbf{A}\mathbf{W}^{+}\left({\mathbf{W}^{+}}'\mathbf{W}^{+}\right)^{-\frac{1}{2}},
\\
\mathbf{T}^{-}&=
\mathbf{A}\mathbf{W}^{+}
\left({\mathbf{W}^{+}}'\mathbf{W}^{+}\right)^{-\frac{1}{2}}-
{\mathbf{W}^{-}}'\left({\mathbf{W}^{-}}'\mathbf{W}^{-}\right)^{-\frac{1}{2}}
\mathbf{E}^{0},
\\
\mathbf{T}^{+}&=
\mathbf{A}'\mathbf{W}^{-}
\left({\mathbf{W}^{-}}'\mathbf{W}^{-}\right)^{-\frac{1}{2}}-
{\mathbf{W}^{+}}'
\left({\mathbf{W}^{+}}'\mathbf{W}^{+}\right)^{-\frac{1}{2}}
{\left(\mathbf{E}^{0}\right)}'.
\end{align}
\end{proof}
\begin{theorem}\label{theoGen}
In the notation of definition \eqref{defAllGen} above, let
$$\mathbf{\hat{A}}={\mathbf{\Omega}^{-}}'{\mathbf{U}^{-}}'\mathbf{A}\mathbf{U}^{+}\mathbf{\Omega}^{+}=
\left(\begin{array}{cc}
\mathbf{E}^{\mathrm{\phantom{f}}}&{\mathbf{D}^{+}}'\\
\mathbf{D}^{-}&\mathbf{F}^{\mathrm{\phantom{r}}}
\end{array}\right)\quad\text{with}\quad\mathbf{E}\in\mathbb{C}^{q\times r}.$$
and let $\lVert\cdot\rVert_{U}$ be a unitarily invariant matrix norm. Then $\mathbf{E}$ and $\mathbf{E}^{0}$ are unitarily equivalent, $\mathbf{D}^{\pm}$ and $\mathbf{T}^{\pm}$ have the same singular values, respectively, and
\begin{align}
\lVert\mathbf{T}^{-}\rVert_{U}=&\min_{\mathbf{\Theta}}\lVert
\mathbf{A}\mathbf{U}^{+}\mathbf{I}^{\mathbf{r}}_{\mathbf{n}}{\mathbf{V}^{+}}'-
\mathbf{U}^{-}\mathbf{I}^{\mathbf{q}}_{\mathbf{m}}{\mathbf{V}^{-}}'\mathbf{\Theta}\rVert_{U}\nonumber\\
\lVert\mathbf{T}^{+}\rVert_{U}=&\min_{\mathbf{\Theta}}\lVert
\mathbf{A}'\mathbf{U}^{-}\mathbf{I}^{\mathbf{q}}_{\mathbf{m}}{\mathbf{V}^{-}}'-
\mathbf{U}^{+}\mathbf{I}^{\mathbf{r}}_{\mathbf{n}}{\mathbf{V}^{+}}'\mathbf{\Theta}'
\rVert_{U}.\nonumber
\end{align}
\end{theorem}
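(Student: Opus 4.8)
The plan is to follow the template of Theorem~\ref{theo1} and Corollary~\ref{cor2}, replacing the single block-diagonal Householder factor $\mathbf{\tilde{H}}$ by the two one-sided block-diagonal unitaries $\mathbf{U}^{-}$ and $\mathbf{U}^{+}$ furnished by the block SVDs, and reusing two ingredients already available: the identities $\mathbf{U}^{-}\mathbf{I}^{\mathbf{q}}_{\mathbf{m}}{\mathbf{V}^{-}}'=\mathbf{W}^{-}\left({\mathbf{W}^{-}}'\mathbf{W}^{-}\right)^{-\frac{1}{2}}$ and its $+$ analogue from the preceding proposition, together with Proposition~\ref{PropOmega}, which gives the commutation rules ${\mathbf{\Omega}^{-}}'\mathbf{I}^{\mathbf{q}}_{\mathbf{m}}=\mathbf{I}_m^q$ and $\mathbf{\Omega}^{+}\mathbf{I}_n^r=\mathbf{I}^{\mathbf{r}}_{\mathbf{n}}$. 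The whole argument is then essentially a bookkeeping exercise of pushing permutations and unitaries through the embedding matrices $\mathbf{I}^{\cdot}_{\cdot}$.

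First I would settle the unitary equivalence of $\mathbf{E}$ and $\mathbf{E}^{0}$. The leading $q\times r$ block is extracted as $\mathbf{E}={\mathbf{I}_m^q}'\mathbf{\hat{A}}\mathbf{I}_n^r$; inserting the definition of $\mathbf{\hat{A}}$ and using the commutation rules to absorb $\mathbf{\Omega}^{\pm}$ gives $\mathbf{E}={\mathbf{I}^{\mathbf{q}}_{\mathbf{m}}}'{\mathbf{U}^{-}}'\mathbf{A}\mathbf{U}^{+}\mathbf{I}^{\mathbf{r}}_{\mathbf{n}}$. Since every factor other than $\mathbf{A}$ is block diagonal, comparison with the blockwise definition of $\mathbf{E}^{0}_{ij}$ collapses the aggregate into $\mathbf{E}^{0}=\mathbf{V}^{-}\mathbf{E}{\mathbf{V}^{+}}'$, and unitary equivalence follows at once from the unitarity of $\mathbf{V}^{\pm}$.

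Next, for the singular-value claim I would generalize relation~\eqref{OHT=0D}. Rewriting $\mathbf{T}^{-}=\mathbf{A}\mathbf{U}^{+}\mathbf{I}^{\mathbf{r}}_{\mathbf{n}}{\mathbf{V}^{+}}'-\mathbf{U}^{-}\mathbf{I}^{\mathbf{q}}_{\mathbf{m}}{\mathbf{V}^{-}}'\mathbf{E}^{0}$ via the SVD identities, I multiply on the left by the unitary ${\mathbf{\Omega}^{-}}'{\mathbf{U}^{-}}'$ and on the right by the unitary $\mathbf{V}^{+}$. Using ${\mathbf{U}^{-}}'\mathbf{U}^{-}=\mathbf{I}$, ${\mathbf{V}^{+}}'\mathbf{V}^{+}=\mathbf{I}$, the commutation rules, and $\mathbf{E}^{0}=\mathbf{V}^{-}\mathbf{E}{\mathbf{V}^{+}}'$, the image reduces to $\mathbf{\hat{A}}\mathbf{I}_n^r-\mathbf{I}_m^q\mathbf{E}=\left(\begin{smallmatrix}\mathbf{E}\\\mathbf{D}^{-}\end{smallmatrix}\right)-\left(\begin{smallmatrix}\mathbf{E}\\\mathbf{0}\end{smallmatrix}\right)=\left(\begin{smallmatrix}\mathbf{0}\\\mathbf{D}^{-}\end{smallmatrix}\right)$, so $\mathbf{T}^{-}$ and $\mathbf{D}^{-}$ share their singular values. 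The $+$ statement is identical after exchanging the roles of the $-$ and $+$ data and replacing $\mathbf{A}$ by $\mathbf{A}'$.

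Finally, for the minimization I repeat the last computation with an arbitrary $\mathbf{\Theta}\in\mathbb{C}^{q\times r}$ in place of $\mathbf{E}^{0}$, obtaining ${\mathbf{\Omega}^{-}}'{\mathbf{U}^{-}}'\big(\mathbf{A}\mathbf{U}^{+}\mathbf{I}^{\mathbf{r}}_{\mathbf{n}}{\mathbf{V}^{+}}'-\mathbf{U}^{-}\mathbf{I}^{\mathbf{q}}_{\mathbf{m}}{\mathbf{V}^{-}}'\mathbf{\Theta}\big)\mathbf{V}^{+}=\left(\begin{smallmatrix}\mathbf{E}-{\mathbf{V}^{-}}'\mathbf{\Theta}\mathbf{V}^{+}\\\mathbf{D}^{-}\end{smallmatrix}\right)$. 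As $\mathbf{\Theta}$ runs over $\mathbb{C}^{q\times r}$ the top block runs over all of $\mathbb{C}^{q\times r}$, and since a unitarily invariant norm cannot increase when a block of rows is replaced by zeros (the remaining matrix is a submatrix, whose singular values are dominated by those of the whole), the norm is minimized precisely when that top block vanishes, i.e.\ for $\mathbf{\Theta}=\mathbf{V}^{-}\mathbf{E}{\mathbf{V}^{+}}'=\mathbf{E}^{0}$; this is exactly the choice producing $\mathbf{T}^{-}$, which yields the claimed identity, and the $+$ formula follows symmetrically. I expect the only real friction to be the index bookkeeping around $\mathbf{\Omega}^{\pm}$ and the embeddings $\mathbf{I}^{\cdot}_{\cdot}$; the sole conceptual input beyond Theorem~\ref{theo1} is the submatrix monotonicity of unitarily invariant norms invoked in this last step.
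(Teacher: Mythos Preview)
Your proposal is correct and follows essentially the same route as the paper's own proof: extract $\mathbf{E}$ via $\mathbf{I}_m^q$, $\mathbf{I}_n^r$ and Proposition~\ref{PropOmega} to obtain $\mathbf{E}^{0}=\mathbf{V}^{-}\mathbf{E}{\mathbf{V}^{+}}'$; then left-multiply $\mathbf{T}^{-}$ by ${\mathbf{\Omega}^{-}}'{\mathbf{U}^{-}}'$ and right-multiply by $\mathbf{V}^{+}$ to reach $\left(\begin{smallmatrix}\mathbf{0}\\\mathbf{D}^{-}\end{smallmatrix}\right)$; and finally redo the same computation with an arbitrary $\mathbf{\Theta}$ to reduce the minimization to $\left\lVert\left(\begin{smallmatrix}\mathbf{E}-{\mathbf{V}^{-}}'\mathbf{\Theta}\mathbf{V}^{+}\\\mathbf{D}^{-}\end{smallmatrix}\right)\right\rVert_U$, minimized at $\mathbf{\Theta}=\mathbf{E}^{0}$. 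The only difference is cosmetic: the paper leaves the last step as ``readily minimized'', whereas you spell out the underlying submatrix monotonicity of unitarily invariant norms.
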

\begin{proof}
Unitary equivalence of $\mathbf{E}$ and $\mathbf{E}^{0}$ follows from
\begin{equation}
\mathbf{E}={\mathbf{I}^{q}_{m}}'\mathbf{\hat{A}}{\mathbf{I}^{r}_{n}}=
\operatorname{diag}\left(
\mathbf{U}^{-}_{1}\mathbf{I}^{q_1}_{m_1},\ldots,
\mathbf{U}^{-}_{l}\mathbf{I}^{q_l}_{m_l}\right)'
\mathbf{A}
\operatorname{diag}\left(
\mathbf{U}^{+}_{1}\mathbf{I}^{r_1}_{n_1},\ldots,
\mathbf{U}^{+}_{k}\mathbf{I}^{r_k}_{n_k}\right),
\end{equation}
which uses proposition \eqref{PropOmega}, yielding $\mathbf{E}^{0}=\mathbf{V}^{-}\mathbf{E}{\mathbf{V}^{+}}'.$\newline
That $\mathbf{D}^{-}$ and $\mathbf{T}^{-}$ share the same multiset of singular values follows from
\begin{align}
\left(\begin{array}{c}\mathbf{0}^{\phantom{-}}\\\mathbf{D}^{-}\end{array}\right)&=
\mathbf{\hat{A}}\mathbf{I}^{r}_{n}-\mathbf{I}^{q}_{m}\mathbf{E}=
{\mathbf{\Omega}^{-}}'{\mathbf{U}^{-}}'\mathbf{A}
\mathbf{U}^{+}\mathbf{\Omega}^{+}\mathbf{I}^{r}_{n}-
\mathbf{I}^{q}_{m}{\mathbf{V}^{-}}'\mathbf{E}^{0}{\mathbf{V}^{+}}\nonumber\\
&={\mathbf{\Omega}^{-}}'{\mathbf{U}^{-}}'\mathbf{T}^{-}{\mathbf{V}^{+}}.
\end{align}
The proof for $\mathbf{D}^{+}$ and $\mathbf{T}^{+}$ is analogous.\newline
Applying the unitary matrices ${\mathbf{\Omega}^{-}}'{\mathbf{U}^{-}}'$ from the left and ${\mathbf{V}^{+}}$ from the right to the second term in the penultimate equation of theorem \eqref{theoGen} yields
\begin{align}
\min_{\mathbf{\Theta}}
\lVert\mathbf{\tilde{A}}
{\mathbf{\Omega}^{+}}'\mathbf{I}^{\mathbf{r}}_{\mathbf{n}}-
{\mathbf{\Omega}^{-}}'\mathbf{I}^{\mathbf{q}}_{\mathbf{m}}
{\mathbf{V}^{-}}'\mathbf{\Theta}\mathbf{V}^{+}\rVert_U
=\lVert\left(\begin{array}{c}\mathbf{E}^{\phantom{-}}\\\mathbf{D}^{-}\end{array}\right)
-\left(\begin{array}{c}
{\mathbf{V}^{-}}'
\mathbf{\Theta}\mathbf{V}^{+}\\
\mathbf{0}\end{array}\right)\rVert_U.
\end{align}
The last term is readily minimized for $\mathbf{\Theta}={\mathbf{V}^{-}}\mathbf{E}{\mathbf{V}^{+}}'=\mathbf{E}^{0}.$
The minimum is obviously unique if $\lVert\cdot\rVert_U$ is a Schatten norm. A similar proof applies for the minimum property of $\mathbf{T}^{+}$.
\end{proof}
\bibliography{MTh_EEPfEBT_ref}
\end{document}